\theoremstyle{plain}
\newtheorem{theorem}{Theorem}
\newtheorem{proposition}{Proposition}
\newtheorem{lemma}{Lemma}
\newtheorem{corollary}{Corollary}
\theoremstyle{definition}
\newtheorem{definition}{Definition}
\newtheorem{example}{Example}
\newcommand{\election}{$\elec = (V, C, \prof)$\xspace}
\newcommand{\elec}{\mathcal{E}}
\newcommand{\prof}{{\vv{\succ}}}
\renewcommand{\succeq}{\succcurlyeq}
\renewcommand{\top}{\mathsf{top}}
\renewcommand{\bot}{\mathsf{bot}}
\newcommand{\invbot}{\bot\inv}
\newcommand{\plu}{\mathsf{plu}}
\newcommand{\veto}{\mathsf{veto}}
\newcommand{\uni}{\mathsf{uni}}
\newcommand{\weight}{\mathsf{weight}}
\newcommand{\dom}{\mathcal{D}}
\newcommand{\cost}{\mathsf{cost}}
\newcommand{\dist}{\mathsf{dist}}
\newcommand{\neigh}{\mathcal{N}}
\newcommand{\timer}{\mathsf{time}}
\newcommand{\mvec}{\ensuremath{\mathbf{M}}\xspace}
\newcommand{\mrow}[1]{\mvec(#1)}
\newcommand{\m}[1]{M(#1)}
\newcommand{\mhatvec}{\ensuremath{\widehat{\mathbf{M}}}}
\newcommand{\mhat}[1]{\widehat{M}(#1)}
\newcommand{\mbarvec}{\ensuremath{\widebar{\mathbf{M}}}}
\newcommand{\mbar}[1]{\widebar{M}(#1)}
\newcommand{\xvec}{\ensuremath{\mathbf{x}}\xspace}
\newcommand{\pvec}{\ensuremath{\mathbf{p}}\xspace}
\newcommand{\p}[1]{p(#1)}
\newcommand{\qvec}{\ensuremath{\mathbf{q}}\xspace}
\newcommand{\q}[1]{q(#1)}
\newcommand{\pq}{\ensuremath{{(\pvec, \qvec)}}\xspace}
\newcommand{\sequentialrule}{\textsc{SerialVeto}\xspace}
\newcommand{\anonrule}{\textsc{SimultaneousVeto}\xspace}
\newcommand{\anonplurule}{\textsc{SimultaneousPluralityVeto}\xspace}
\newcommand{\pluralityveto}{\textsc{PluralityVeto}\xspace}
\newcommand{\pluralitymatching}{\textsc{PluralityMatching}\xspace}
\newcommand{\pluralityrule}{\textsc{Plurality}\xspace}
\newcommand{\vetorule}{\textsc{Veto}\xspace}
\newcommand{\uc}{prefix-intersecting\xspace}
\newcommand{\axiom}[1]{\textbf{#1}}
\DeclareMathOperator{\argmin}{argmin}
\providecommand{\Set}[2]{\ensuremath{\SET{#1 \mid #2}}\xspace}
\providecommand{\SET}[1]{\ensuremath{\{ #1 \}}\xspace}
\providecommand{\SetCard}[1]{\ensuremath{| #1 |}\xspace}
\DeclareMathSymbol{\R}{\mathord}{AMSb}{"52}
\newcommand{\N}{\mathbb{N}}
\newcommand{\inv}{^{-1}}
\newcommand{\comp}[1]{\overline{#1}}
\DeclarePairedDelimiter{\ceil}{\lceil}{\rceil}
\newcommand{\dd}{\ensuremath{\mathrm{d}}\xspace}
\newcommand{\ind}{\mathcal{I}}
\title{Generalized Veto Core and a Practical Voting Rule with Optimal Metric Distortion}
\author{Fatih Erdem Kizilkaya}
\email{fatih.erdem.kizilkaya@gmail.com}
\author{David Kempe}
\email{david.m.kempe@gmail.com}
\affiliation{%
  \institution{University of Southern California}
  \country{USA}
}
\begin{abstract}
We revisit the recent breakthrough result of Gkatzelis et al. on (single-winner) metric voting, which showed that the optimal distortion of 3 can be achieved by a mechanism called \textsc{PluralityMatching}.
The rule picks an arbitrary candidate for whom a certain candidate-specific bipartite graph contains a perfect matching.
Subsequently, a much simpler rule called \textsc{PluralityVeto} was shown to achieve distortion 3 as well;
this rule only constructs such a matching implicitly, but it, too, makes some arbitrary decisions affecting the outcome.
Our point of departure is the question whether there is an intuitive interpretation of this matching, with the goal of identifying the underlying source of arbitrariness in these rules.
We first observe directly from Hall's condition that a matching for candidate $c$ certifies that there is no coalition of voters that can jointly counterbalance the number of first-place votes $c$ received, along with the first-place votes of all candidates ranked lower than $c$ by any voter in this coalition.
This condition closely mirrors the classical definition of the (proportional) veto core in social choice theory, except that coalitions can veto a candidate $c$ whenever their size exceeds the plurality score of $c$, rather than the average number of voters per candidate.
Based on this connection, we define a general notion of the veto core with arbitrary weights for voters and candidates which respectively represent the veto power and the public support they have.

This connection opens up a number of immediate consequences.
Previous methods for electing a candidate from the veto core can be interpreted simply as matching algorithms.
Different election methods realize different matchings, in turn witnessing different sets of candidates as winners.
Viewed through this lens, we first resolve nontrivial tie breaking issues contributing to the inherent arbitrariness of the above rules.
Our approach to ties reveals a novel characterization of the (general) veto core, showing it to be identical to the set of candidates who can emerge as winners under a natural class of matching algorithms reminiscent of \textsc{SerialDictatorship}.
Then, we extend this class of voting rules into continuous time, and obtain a highly practical voting rule with optimal distortion 3, which is also intuitive and easy to explain:
Each candidate starts off with public support equal to his plurality score. 
From time 0 to 1, every voter continuously brings down, at rate 1, the support of her bottom choice among not-yet-eliminated candidates.
A candidate is eliminated if he is opposed by a voter after his support reaches 0.
On top of being anonymous and neutral, the absence of arbitrary non-deterministic choices in this rule allows for the study of other axiomatic properties that are desirable in practice.
We show that the canonical voting rule we propose for electing from the (plurality) veto core also satisfies  
resolvability, monotonicity, majority, majority loser, mutual majority and reversal symmetry, in addition to still guaranteeing metric distortion 3.

\end{abstract}
\keywords{Computational Social Choice, Voting Theory, Metric Distortion}  
\begin{document}

\maketitle

\section{Introduction}
\label{sec:intro}
Voting is a fundamental process by which a group of agents with different preferences can decide on one alternative; applications range widely, from groups of friends choosing a restaurant or meeting time, to departments and other bodies deciding on a course of action or budget, to towns, states or countries electing politicians.
When each voter can support or oppose only one candidate, the~most informative votes are those for their top or bottom choices.
As a result, \pluralityrule and \vetorule \ --- respectively, picking the candidate with the largest and smallest number of such votes --- are the most natural voting rules for single-vote elections.
Better results can be obtained when the~voters' full rankings are taken into account, and indeed, a large body of theoretical and practical work has focused on \emph{ranked-choice} voting rules, which have these complete rankings available.
Of~course, there are many natural ways to aggregate this information into the choice of a winner. 
The choice of voting rule, ultimately, depends on the specific context and goals of the election.
When picking a restaurant or meeting time within a group of friends, the preferences of minorities, such as those with dietary restrictions or limited availability, may be more important; thus, using \vetorule might be a better idea than \pluralityrule.
Conversely, for high-stakes settings, such as presidential elections, picking a winner opposed by a majority could lead to conflicts; thus, the benefits of \pluralityrule may outweigh those of \vetorule.

In light of the plethora of available (ranked-choice) voting rules, it becomes important to evaluate and compare the pros and cons of different rules for different settings.
One approach to do so, dating back to at least the 18th century \citep{borda:elections,condorcet:essay}, is via axioms which the rule may or may not satisfy; see \citet{BCULP:social-choice} for a detailed overview.
Among those, two absolutely essential axioms are \axiom{Anonymity} and \axiom{Neutrality} which, respectively, require that all voters and candidates be treated the same a priori.
As shown by \citet{moulinStrategySocialChoice1983} (though similar ideas are present in earlier work, e.g., \citet{smith:variable-electorate}), no voting rule can be both anonymous and neutral without having tied outcomes.
In order to deal with the presence of ties necessitated by \axiom{Anonymity} and \axiom{Neutrality}, another essential axiom is \axiom{Resolvability} \citep{tidemanIndependenceClonesCriterion1987} which requires that ties can always be broken in favor of any winning candidate $w$ by adding a single voter who ranks $w$ at the top.

An alternative approach to defining desirable axiomatic properties, proposed much more recently, and to date primarily pursued by the computer science community, is to interpret voting as an~implicit \emph{optimization} problem of finding a ``best'' candidate.
Here, a natural modeling assumption is that the voters and candidates are jointly embedded in an abstract metric space \citep{anshelevich:bhardwaj:postl,anshelevich:bhardwaj:elkind:postl:skowron,anshelevich:ordinal,anshelevich:filos-ratsikas:shah:voudouris:retrospective,anshelevich:filos-ratsikas:shah:voudouris:reading-list}, and the distance between a voter and a candidate captures how much the voter dislikes the candidate, or how much their opinions/positions on key issues differ.
This generalizes the notion of \emph{single-peaked} preferences \citep{black:rationale,downs:democracy,moulin:single-peak}, which in its original definition only considered the line instead of a general metric space.
A natural definition of a~``best'' candidate is then one who minimizes the total distance to all voters, and a good voting rule should approximately minimize this total distance with its choices of candidates.
However, there is a~key obstacle: the~voting rule does not have access to the actual distances\footnote{Indeed, it would be very unlikely that voters could even articulate the nature of the ``opinion space,'' let alone estimate/communicate precise distances in it.}. 
Instead, the rankings provide \emph{ordinal} information about the distances, in that each voter ranks candidates from closest to furthest.\footnote{It has been observed repeatedly that humans are much better at \emph{comparing} options than articulating the desirability of an option in isolation.} 
This lack of information about the cardinal distances can of course result in suboptimal choices for any rule, and the worst-case multiplicative gap between the sum of distances from the chosen candidate to all voters, vs. the optimum one chosen with knowledge of the metric space, has been termed \emph{distortion} of the voting rule. (See \cref{sec:distortion} for the formal definition.)

\citet{anshelevich:bhardwaj:postl} established a lower bound of 3 on the distortion of any deterministic voting rule.
After a sequence of improvements and partial progress \citep{munagala:wang:improved,DistortionDuality}, this lower bound was matched by the voting rule \pluralitymatching in a~recent breakthrough result \citep{gkatzelis:halpern:shah:resolving} which showed that the voting rule has distortion 3.
\pluralitymatching is a complex rule not likely to be understood by anyone without mathematical background.
This shortcoming was remedied in \citet{PluralityVeto}, where it was shown that a very simple rule called \pluralityveto enjoys the same optimal distortion 3 guarantee as \pluralitymatching.
\pluralityveto works as follows.
Each of the $m$ candidates starts with a~score equal to his\footnote{For consistency, we will always refer to voters with female and candidates with male pronouns.} number of first-place votes.
Then, each of the $n$ voters is processed in an~arbitrary order.
When it is the turn of a voter, she considers all candidates who currently have positive scores, and subtracts 1 from the candidate she likes least among them.
Note that candidates start with a~total score of $n$, and because each voter subtracts 1, all candidates finish with a score of~0.
The~last candidate whose score reaches 0 is chosen as the winner.

The simplicity of \pluralityveto, along with its natural intuition of counterbalancing top votes against bottom votes, suggests that it might be suitable for implementation in practice. 
However, \pluralityveto suffers from a fatal flaw which precludes its acceptance in practice: it violates the~\axiom{Anonymity} axiom. 
This can be seen most immediately in the following basic example.

\begin{example} \label{ex:obvious-tie}
  There are two candidates $\SET{a,b}$ and two voters $\SET{1,2}$ with preferences $a \succ_{1} b$ and $b \succ_{2} a$.
  If voter 1 vetos first, $a$ wins --- otherwise, $b$ wins.
  Stated differently: if the veto order is $(1, 2)$, then switching the voters' preferences changes the outcome, violating \axiom{Anonymity}.\footnote{The \axiom{Anonymity} property could be restored by choosing a ranking-dependent veto order, e.g., having voters veto in lexicographic order of their ballots. However, such a solution would come at the cost of violating the equally important \axiom{Neutrality} axiom.}
\end{example}

Naturally, this example poses a problem for \emph{every} deterministic voting rule.
The election from \cref{ex:obvious-tie} should clearly be declared a tie, and resolved separately by some tie breaker, such as a~coin flip.
The following question then becomes central: 
if a~mechanism akin to \pluralityveto is to return a tied subset of candidates, all of whom would guarantee metric distortion\footnote{Akin to the distortion of a mechanism, the distortion of a candidate is the ratio between his sum of distances to all voters, and the sum of distances of an optimal candidate.} 3, how, and on what grounds, should such a subset be chosen?

\medskip

To begin answering this question, and develop practical mechanisms of (metric) distortion 3, we begin by revisiting the \pluralitymatching mechanism of \citet{gkatzelis:halpern:shah:resolving}. 
\pluralitymatching is based on the notion of \emph{domination graphs}: the domination graph $G_c$ for candidate $c$ is a bipartite graph between voters and voters in which an edge from $v$ to $v'$ exists if and only if $v$ weakly prefers $c$ over the top choice of $v'$. (See \cref{sec:domination-graphs} for the formal definition.)
It~had been shown by \citet{munagala:wang:improved} and \citet{DistortionDuality} that any candidate $c$ for whom $G_c$ contains a perfect matching has distortion 3; the breakthrough of \citet{gkatzelis:halpern:shah:resolving} was to show (non-constructively) that such a candidate $c$ always exists.
Their voting rule \pluralitymatching finds such a candidate by brute force.
This description is non-deterministic about which candidate(s) will be returned in case there are multiple such $c$. Picking a candidate deterministically would violate the \axiom{Neutrality} axiom. On the other hand, the following example shows that returning all such candidates violates \axiom{Resolvability}.

\begin{example}\label{ex:resolvability}
There are three candidates $\SET{a,b,c}$, and two voters $\SET{1, 2}$ with rankings $a \succ_1 b \succ_1 c$ and $c \succ_2 b \succ_2 a$.
All candidates have the perfect matching $\SET{(1, 2), (2, 1)}$ in their domination graph, and thus would be winners under \pluralitymatching returning all candidates with perfect matchings.
Even if a voter is added who ranks $b$ at the top (and without loss of generality $a$ second), both $G_b$ and $G_a$ contain perfect matchings, so the election cannot be resolved in favor of $b$.
\end{example}

The execution of \pluralityveto can be viewed as not just determining a winning candidate $c$, but constructing a~perfect matching in $G_c$ along the way --- thus, explicitly \emph{witnessing} distortion 3.
Our endeavor to improve on \pluralityveto, and understand \pluralitymatching more deeply, begins with an examination of the role of these matchings.
Besides witnessing distortion 3, is there any constructive interpretation that one could explain to a voter not trained in graph theory, but interested in acquiring intuition for the voting rule?
The following observation follows directly from Hall's Condition:
The graph $G_c$ contains a perfect matching if and only if there is no subset $S$ of voters whose size is larger than the total number of first-place votes of not only $c$, but also all candidates whom at least one member of $S$ ranks lower than $c$.
In other words, no coalition of voters has enough combined veto power to counterbalance the~first-place votes of $c$ and candidates ranked lower by the coalition.

This condition is highly reminiscent of the classic definition of the \emph{veto core} \citep{moulinProportionalVetoPrinciple1981}; the latter simply has \emph{uniform} scores of $n/m$ for candidates which need to be counterbalanced, as opposed to the numbers of first-place votes. (See \cref{sec:proportional} for the formal definition.)
Thus, we~define a generalized notion of the veto core, in which voters and candidates can have arbitrary weights, so long as the sums of the weights are the same.
It then follows immediately from Hall's condition that the veto core consists of the~candidates $c$ for whom a natural weighted generalization of the bipartite graph $G_c$ (defined now between voters and candidates) has a (fractional) perfect matching (see \cref{sec:equivalance}).

Past algorithms for electing a candidate from the veto core \citep{moulinVotingProportionalVeto1982} can then simply be understood as constructing a \emph{witnessing (perfect) matching} for some candidate (though various subtle tie-breaking choices in earlier definitions introduce some strange artifacts).
The witnessing matchings possess rich structure, and focusing on them turns out to be very fruitful in many respects. 
First, notice that the same perfect matching $\mvec$ may exist in the~domination graphs $G_c$ of several distinct candidates $c$; let $W(\mvec)$ be the set of such candidates. 
By focusing on the construction of the~witnessing matching $\mvec$ as the primary task of an algorithm, $W(\mvec)$ emerges as a natural set of tied winners to return. 
Interestingly, $W(\mvec)$ is convex in all possible embedding spaces for all~$\mvec$, providing a~desirable ``consistency'' condition for the returned set of tied winners (see \cref{sec:convex}).

Both past work on the veto core and (the analysis of) \pluralityveto can be viewed as a~greedy construction of witnessing matchings.
We show that these algorithms are special cases of a general class of natural algorithms which greedily construct a witnessing matching for arbitrary weighted veto cores.
The fact that \pluralityveto only selects a single winner is an artifact of an~overly strict elimination rule. 
By making some important modifications --- specifically, eliminating a candidate not when his score reaches 0 but only \emph{if he is subsequently opposed by some voter} --- we obtain the~desirable property that whenever a witnessing matching is produced, \emph{the entire set $W(\mvec)$} is returned as tied winners (see \cref{sec:serialveto}).
This also resolves another undesirable property of \pluralityveto: that a candidate $c$ who is ranked second by every voter could never win, even when $G_c$ contains a perfect matching.
In addition, we show that for arbitrary weights on candidates and voters, the~corresponding veto core exactly equals the set of candidates who can be returned as winners by the modified algorithm; the proof of this result utilizes an argument highly reminiscent of the~\textsc{TopTradingCycles} algorithm (see \cref{sec:characterization}).

\subsubsection*{Our Main Contribution}
Returning to our original goal, we present an anonymous, neutral and resolvable voting rule that achieves optimal metric distortion by implicitly constructing a witnessing matching $\mvec$ and returning $W(\mvec)$.
Here, we adapt an algorithm of \citet{ianovskiComputingProportionalVeto2021} for the veto core, which can be construed as vetoing candidates continuously over time.
As in \pluralityveto, each candidate starts with a score equal to his number of first-place votes; however, our voting rule starts off with the set of all candidates as eligible winners (even those with score 0). 
Then, all voters are processed simultaneously in continuous time from time 0 to 1. 
Every voter, starting from her bottom choice, \emph{instantly} eliminates those with score 0 and moves on to her next bottom choice.
When she arrives at a candidate with positive score, she decrements his score continuously at rate~1.
The set of winners consists of all candidates who are not eliminated until time 1 (see \cref{sec:anonrule}).

In addition to guaranteeing optimal distortion (for all tied winners), and satisfying the essential axioms of \axiom{Anonymity}, \axiom{Neutrality} and \axiom{Resolvability}, our practical rule satisfies many other basic axioms: \axiom{Monotonicity}, \axiom{Majority}, \axiom{Majority Loser}, \axiom{Mutual Majority}, and \axiom{Reversal Symmetry} (see \cref{sec:axioms} for definitions), and it is also easy to explain, even without reference to distortion.
Thus, we believe that the rule is indeed practical.

Our work unites not only two bodies of work on voting but also two differentiating factors in voting settings: the protection of the majority versus minorities. 
The motivation for the notion of veto core is the protection of minorities (and it is optimal in that sense under some formalizations of this principle \citep{kondratevMeasuringMajorityPower2020}),
whereas optimizing metric distortion clearly aims for the protection of the majority, by (approximately) minimizing the (utilitarian) social cost. 
Thus, the~generalized veto core gives rise to a spectrum of voting rules via initial weights interpolating between the plurality scores and uniform weights, and consequently, between the protection of majority and minorities.
For further discussion and other possible future directions, see \cref{sec:conclusion}.

\subsubsection*{Related Work}
\label{sec:relatedwork}
\noindent Selection methods in which parties alternately cross off alternatives in each round are common in practice and have been a subject of research in various settings, such as arbitration schemes \cite{bloomAnalysisSelectionArbitrators1986, anbarciFiniteAlternatingMoveArbitration2006, barberaCompromisingCompromiseRules2022} and negotiation schemes \cite{erlichNegotiationStrategiesAgents2018, anbarciNoncooperativeFoundationsArea1993, barberaBalancingPowerAppoint2017}.
In the context of voting, a~sequential veto-based mechanism was perhaps first studied formally by \citet{muellerVotingVeto1978}, but only for alternatives consisting of one proposal from each voter, in addition to the status quo.  
In the analysis of \citet{muellerVotingVeto1978}, voters take turns in an arbitrary order and strike off the alternative they like the least. 
Exactly one outcome is selected by this rule, and clearly no voter will see their worst outcome elected. 
However, this~rule requires the number of alternatives to be one greater than the number of voters.
Much of the follow-up work has focused on strategic behavior by voters under this type of rule.
For example,\citet{pelegConsistentVotingSystems1978,moulinImplementingEfficientAnonymous1980,moulinPrudenceSophisticationVoting1981} show that variants of iterated elimination rules can elect the same winner under strategic behavior as with truthful behavior, for various equilibrium notions.

\citet{moulinProportionalVetoPrinciple1981} extended the concept of voting by veto from individuals to coalitions to study the core of the resulting cooperative game, and thus, the~proportional veto core was introduced.
Subsequently, \citet{moulinVotingProportionalVeto1982} proposed a rule electing from the proportional veto core, and studied the strategic behavior by voters under this rule.
Recently, \citet{ianovskiComputingProportionalVeto2021} (see also the expanded version \citet{ianovskiComputingProportionalVeto2023}) showed how to compute the~proportional veto core in polynomial time and introduced an anonymous and neutral rule electing from the proportional veto core.
To some degree, the veto core is also related to the notion of proportional fairness, which measures whether the influence of cohesive groups of agents on the voting outcome is proportional to the group size \cite{ebadianOptimizedDistortionProportional2022}.

The utilitarian analysis of voting rules through the lens of approximation algorithms was first suggested by \citet{BCHLPS:utilitarian:distortion,caragiannis:procaccia:voting,procaccia:approximation:gibbard,procaccia:rosenschein:distortion}.
\citet{boutilier:rosenschein:incomplete,anshelevich:bhardwaj:postl} were the first to clearly articulate the tension between the objective of maximizing utility (or minimizing cost) and the available information, which is only ordinal; they also termed the resulting gap \emph{distortion}.
In the earlier work, the focus was on (positive) utilities, and no additional assumptions (such as metric costs) were placed on the utilities \cite{BCHLPS:utilitarian:distortion,caragiannis:procaccia:voting,procaccia:approximation:gibbard,procaccia:rosenschein:distortion}.
Without any structures on the costs, the lowest possible distortion is $O(m^2)$ \cite{caragiannisSubsetSelectionImplicit2017}.
Connections between both types of distortion analysis have also been made: Very recently, \citet{BestOfBothDistortionWorlds} showed that a variant of \pluralityveto achieves the asymptotically optimal distortion of $\Theta(1)$ and $\Theta(m^2)$, respectively, under the metric and utilitarian notions of distortion.

There has been extensive work on giving lower and upper bounds on the metric distortion of specific rules in the social choice literature \cite{anshelevich:bhardwaj:postl, anshelevich:bhardwaj:elkind:postl:skowron, skowronSocialChoiceMetric2017, goel:krishnaswamy:munagala, munagala:wang:improved, DistortionDuality}.
The notion of metric distortion has also  been studied for settings other then single-winner elections, such as 
committee election \cite{caragiannisMetricDistortionMultiwinner2022}, 
facility location \cite{anshelevichOrdinalApproximationSocial2021},
distrubuted voting \cite{voudourisTightDistortionBounds2023, anshelevichDistortionDistributedMetric2022}
and distributed facility location \cite{filos-ratsikasSettlingDistortionDistributed2023, filos-ratsikasApproximateMechanismDesign2021}.
We refer the reader to the surveys of \citet{anshelevich:filos-ratsikas:shah:voudouris:retrospective, anshelevich:filos-ratsikas:shah:voudouris:reading-list} for further discussion of related work on metric distortion. 

In this work, we focus exclusively on \emph{deterministic} voting rules. The distortion --- both utilitarian and metric --- of randomzied voting rules has also been studied extensively.
\citet{anshelevich:postl:randomized} showed that \textsc{Random Dictatorship} --- the rule electing the top choice of a uniformly random voter --- achieves expected distortion $3-\frac{2}{n}$.
This bound is strictly better than the lower bound of 3 for deterministic rules.
The bound was improved slightly by \citet{DistortionCommunication} to $3-\frac{2}{m}$, by mixing between \textsc{Random Dictatorship} and another simple rule.
In the setting of our \cref{ex:obvious-tie}, it is easy to show that no randomized rule can achieve expected distortion less than 2.
Perhaps somewhat surprisingly, this \emph{lower} bound was improved to 2.1126 by \citet{charikar:ramakrishnan:randomized-distortion}.
The~gap between the upper and lower bounds constitutes one of the most intriguing open questions about the distortion of voting rules.
While there is no concrete direction, it is conceivable that an in-depth study of the generalized veto core may lead to useful insights towards closing or reducing the gap.

\section{Preliminaries}
\label{sec:prelim}
Throughout, vectors and matrices are denoted by bold lowercase and uppercase letters, respectively.
We denote the $i^{\text{th}}$ element of a vector $\xvec$ by $x(i)$,
and we extend this notation to sets via addition, i.e., $x(T) = \sum_{i \in T} x(i)$.
We denote the $i^{\text{th}}$ row vector of a matrix $\mvec$ by $\mrow{i}$ and we denote the~element in the $i^\text{th}$ row and $j^\text{th}$ column by $\m{i, j}$.
Given a set $S$, let $\Delta(S)$ denote the probability simplex over $S$, i.e., the set of non-negative weight vectors over $S$ that add up to 1.
Given a weight vector $\xvec$ defined over a set $S$, we use $[\xvec \neq 0]$ to denote the elements of $S$ with non-zero weights, i.e., $[\xvec \neq 0] = \Set{i \in S}{x(i) \neq 0}$. 
Given integers $i$~and~$j$, we use $[i, j]$ to denote the set of integers between $i$ and $j$ (inclusively).  

\bigskip

Let $V$ be the set of $n$ \emph{voters}, and let $C$ be the set of $m$ \emph{candidates}. 
For each voter $v$, let $\succ_v$ be a~\emph{ranking} (i.e., total order) over $C$.
The vector of all rankings $\prof = (\succ_v)_{v \in V}$ is referred to as a~\emph{ranked-choice profile}, and \election is referred to as an \emph{election}.
A \emph{voting rule} $f$ is an algorithm that returns a~candidate
(or non-empty set of candidates) $f(\elec) \in C$ (or $f(\elec) \subseteq C$) given an election~$\elec$. 
The~latter case --- a voting rule returning a set of candidates --- will be important in this paper because we~explicitly consider voting rules returning multiple tied winners.

We say that voter $v$ ranks candidate $a$ \emph{higher} than candidate $b$ if $a \succ_v b$. 
If $a = b$ or $a \succ_v b$, we~write $a \succeq_v b$ and say that $v$ ranks $a$ \emph{weakly higher} than $b$. 
We also extend this notation to \emph{coalitions}, i.e., non-empty subsets of voters. We write $a \succ_T b$ to denote that every voter in coalition $T$ ranks candidate $a$ higher than candidate $b$. 
Extending this notation further, we also write $A \succ_T b$ to denote that $a \succ_T b$ for all $a \in A$.
 
We refer to the candidate ranked highest by voter $v$ as the \emph{top choice} of $v$, denoted by $\top(v)$. 
The~number of voters who have candidate $c$ as their top choice is referred to as the \emph{plurality score} of~$c$, denoted by $\plu(c)$.  
The candidate ranked lowest by voter $v$ is referred to as the \emph{bottom choice} of~$v$, denoted by $\bot(c)$. 
The~number of voters who have candidate $c$ as their bottom choice is referred to as the \emph{veto score} of $c$, denoted by $\veto(c)$.  
We also use $\top_A(v)$ and $\bot_A(v)$, respectively, to denote the top and bottom choices of voter $v$ among the subset $A$ of candidates.

\subsection{Metric Distortion} \label{sec:distortion}

A \emph{metric} over a set $S$ is a function $d : S \times S \rightarrow \mathbb{R}_{\geq 0}$ which satisfies the following conditions for all $a, b, c \in S$: 
(1)~Positive Definiteness: $d(a,b) = 0$ if and only\footnote{Our proofs do not require the ``only if'' condition, so technically, all our results hold for pseudo-metrics, not just metrics.}  if $a=b$; 
(2)~Symmetry: $d(a, b) = d(b, a)$; 
(3)~Triangle inequality: $d(a,b) + d(b,c) \geq d(a,c)$.
  
Given an election \election, we say that a metric $d$ over%
\footnote{We only care about the distances between voters and candidates, so $d$ can be defined as a function $d: V \times C \to \R_{\geq 0}$ instead of on $V \cup C$. The triangle inequality can then be written as $0 \leq d(v, c) \leq d(v, c') + d(v', c') + d(v', c)$ for all $v, v' \in V$ and for all $c, c' \in C$.} 
$V \cup C$ is \emph{consistent} with the ranking $\succ_v$ of voter $v$ if $d(v, a) \leq d(v, b)$ for all $a, b \in C$ such that $a \succ_v b$. 
We say that $d$ is consistent with a ranked-choice profile $\prof$ if it is consistent with the ranking $\succ_v$ for all $v \in V$. 
We use $\dom(\prof)$ to denote the domain of metrics consistent with $\prof$.

The \emph{(utilitarian) social cost} of a candidate $c$ under a metric $d$ is defined as the total distance of voters to $c$, i.e., $\cost(c, d) = \sum_{v \in V} d(v, c)$. 
A candidate $c^*_d $ is \emph{optimal} with respect to the metric $d$ if $c^*_d \in \argmin_{c \in C} \cost(c, d)$.  

The \emph{distortion of a candidate $c$} in an election $\elec$, denoted by $\dist_\elec(c)$, is the~largest possible ratio between the cost of  $c$ and that of an optimal candidate $c^*_d$ with respect to a metric $d \in \dom(\prof)$. That~is,
\[\dist_\elec(c) = \sup_{d \in \dom(\prof)} \frac{\cost(c, d)}{\cost(c^*_d, d)}.\]

The \emph{distortion of a voting rule $f$}, denoted by $\dist(f)$, is the worst-case distortion of any winner of $f$ among all elections. That is,
\[\dist(f) = \max_{\elec} \max_{c \in f(\elec)} \dist_\elec(c).\]

\subsection{Domination Graphs} \label{sec:domination-graphs}

Domination graphs offer a sufficient condition for a candidate to have distortion at most 3, which is known to be the lowest possible distortion that a voting rule can have \citep{anshelevich:bhardwaj:postl}.

\begin{definition}[\emph{Domination Graphs} \citep{gkatzelis:halpern:shah:resolving}]	
Given an election \election, the~\emph{domination graph} of candidate $a$ is the bipartite graph $G_a = (V, C, E_a)$ with the neighborhood function $\neigh_a(v) = \Set{c \in C}{a \succeq_v c}$, i.e., $(v, c) \in E_a$ if and only if $v$ weakly prefers $a$ over $c$.
\end{definition}

Given vectors $\pvec \in \Delta(V)$ and $\qvec \in \Delta(C)$, we refer to a matrix $\mvec \in \Delta(V \times C)$ as a \emph{$\pq$-matching} if $\p{v} = \sum_{c \in C} \m{v, c}$ for all $v \in V$ and $\q{c} = \sum_{v \in V} \m{v, c}$ for all $c \in C$.
We say that candidate $a$ \emph{admits} the $\pq$-matching $\mvec$ if and only if $(v, c) \in E_a$ whenever $\m{v, c} \neq 0$; 
this is equivalent to requiring that $(v,\top_{[\mrow{v} \neq 0]}(v)) \in E_a$ for all $v \in V$.
We say that candidate $a$ is \emph{$\pq$-dominant} if there exists a $\pq$-matching that $a$ admits. 
For any $\pvec \in \Delta(V)$ and $\qvec \in \Delta(C)$, there always exists a~$\pq$-dominant candidate, which was first shown non-constructively by \citet{gkatzelis:halpern:shah:resolving} and then constructively by \citet{PluralityVeto}.

\bigskip

As shown by \citet{gkatzelis:halpern:shah:resolving}, Hall's condition easily generalizes to $\pq$-matchings.
To~state the condition, we extend the neighborhood function to coalitions $T \subseteq V$, by defining that $\neigh_a(T) = \bigcup_{v \in T} \neigh_a(v) = \Set{c \in C}{\text{there exists a } v \in T \text{ such that } a \succeq_v c}$.

\begin{lemma}[\citep{gkatzelis:halpern:shah:resolving}, Lemma 1]
\label{lem:hall}
	For any weight vectors $\pvec \in \Delta(V)$ and $\qvec \in \Delta(C)$, candidate $a$ is $\pq$-dominant if and only if  $p(T) \leq q(\neigh_a(T))$ for all $T \subseteq V$.  
\end{lemma}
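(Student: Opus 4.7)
The plan is to prove the two directions separately, with the backward direction being the substantive one. The argument closely parallels the classical derivation of Hall's marriage theorem from max-flow / min-cut, adapted to real-valued weights.

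For the forward (necessity) direction, I would assume that $a$ admits a $\pq$-matching $\mvec$ and fix any $T \subseteq V$. By definition $p(v) = \sum_{c \in C} \m{v,c}$, and by admissibility the summands are supported only on $c \in \neigh_a(v) \subseteq \neigh_a(T)$. Summing over $v \in T$ and then relaxing the outer sum to all of $V$ bounds column sums:
\[ p(T) \;=\; \sum_{v \in T} \sum_{c \in \neigh_a(T)} \m{v,c} \;\leq\; \sum_{v \in V} \sum_{c \in \neigh_a(T)} \m{v,c} \;=\; q(\neigh_a(T)). \]

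For the backward (sufficiency) direction, I would set up an $s$--$t$ transportation network: a source $s$ with edges $(s,v)$ of capacity $\p{v}$, edges $(v,c)$ of infinite capacity whenever $(v,c) \in E_a$, and edges $(c,t)$ of capacity $\q{c}$ to a sink $t$. Since $p(V) = q(C) = 1$, a $\pq$-matching admitted by $a$ is in bijection with an $s$--$t$ flow of value $1$ (the flow on edge $(v,c)$ is $\m{v,c}$, and the source and sink edges must be saturated to hit value $1$, yielding the marginal constraints). By max-flow / min-cut it therefore suffices to rule out any cut of capacity strictly less than $1$.

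Any finite-capacity cut must cut $(s,v)$ for all $v$ outside some set $T$ and $(c,t)$ for all $c$ inside some set $S$, with $S \supseteq \neigh_a(T)$ (otherwise an infinite-capacity edge $(v,c)$ with $v \in T$, $c \in \neigh_a(v) \setminus S$ survives). The capacity of such a cut is
\[ p(V \setminus T) + q(S) \;\geq\; (1 - p(T)) + q(\neigh_a(T)) \;\geq\; 1, \]
the last step by hypothesis. The only real obstacle is justifying max-flow / min-cut in the fractional/real setting; this is standard, either via LP strong duality or by a rational-scaling plus continuity argument, since the feasibility region is a bounded polytope.
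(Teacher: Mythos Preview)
Your proof is correct. Note that the paper does not actually prove this lemma itself: it is quoted verbatim as Lemma~1 of \citet{gkatzelis:halpern:shah:resolving}, with only the remark that ``Hall's condition easily generalizes to $\pq$-matchings.'' Your argument supplies exactly the standard justification behind that remark --- the necessity direction by summing row constraints of an admitted matching, and the sufficiency direction via the max-flow/min-cut analysis of the bipartite transportation network --- and is the expected proof of this result.
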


\bigskip

Let $p^\uni(v) = 1/n$, for all $v \in V$, be the uniform weight vector.
Let $q^\plu(c) = \plu(c)/m$, for all $c \in C$, be the normalized plurality scores.
We refer to $(\pvec^\uni, \qvec^\mathsf{plu})$-dominant candidates as \emph{plurality dominant candidates}.
The following proposition by \citet{gkatzelis:halpern:shah:resolving} (independently shown by \citet{munagala:wang:improved,DistortionDuality}) captures the main way in which domination graphs are used in the metric distortion framework.

\begin{lemma}[\citep{gkatzelis:halpern:shah:resolving}, Theorem 1]
\label{lem:opt}
In every election, the distortion of every plurality dominant candidate is at most $3$.
\end{lemma}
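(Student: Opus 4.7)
The plan is to fix an arbitrary metric $d \in \dom(\prof)$ and an optimal candidate $c^*_d$, and to show that $\cost(a, d) \leq 3 \cdot \cost(c^*_d, d)$ for any plurality dominant candidate $a$. By definition, $a$ admits some $(\pvec^\uni, \qvec^\plu)$-matching $\mvec$, and the matching is what allows us to compare $a$ to $c^*_d$ through carefully chosen intermediate voters/candidates.

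First I would use the defining property of admission: whenever $\m{v, c} \neq 0$, we have $a \succeq_v c$, so consistency of $d$ with $\prof$ yields $d(v, a) \leq d(v, c)$. Multiplying by $\m{v, c}$ and summing over $c$, the row constraint $\sum_c \m{v,c} = p^\uni(v) = 1/n$ gives $\tfrac{1}{n} d(v, a) \leq \sum_c \m{v, c} \, d(v, c)$, and summing this over $v \in V$ yields
\[
\tfrac{1}{n} \cost(a, d) \;\leq\; \sum_{v \in V} \sum_{c \in C} \m{v, c} \, d(v, c).
\]

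Next I would bound each $d(v, c)$ on the right-hand side by routing through $c^*_d$ and a voter $u$ whose top choice is $c$. For each candidate $c$ with $\plu(c) > 0$, distribute the column weight $\q^\plu(c)$ uniformly among the $\plu(c)$ voters $u$ with $\top(u) = c$. The triangle inequality yields $d(v, c) \leq d(v, c^*_d) + d(c^*_d, u) + d(u, c)$, and since $\top(u) = c$, consistency gives $d(u, c) \leq d(u, c^*_d)$. Averaging over such $u$'s inside the sum produces
\[
\sum_{v, c} \m{v, c} \, d(v, c) \;\leq\; \sum_{v, c} \m{v, c} \, d(v, c^*_d) \;+\; 2 \sum_{c} \q^\plu(c) \cdot \tfrac{1}{\plu(c)} \sum_{u : \top(u) = c} d(u, c^*_d).
\]
The first term simplifies using the row constraints to $\tfrac{1}{n} \cost(c^*_d, d)$, while the second term, using that $\q^\plu(c)$ is proportional to $\plu(c)$, telescopes cleanly over all voters and contributes $\tfrac{2}{n} \cost(c^*_d, d)$. (Candidates with $\plu(c) = 0$ contribute nothing, since Hall's condition via \cref{lem:hall} applied to $T = \emptyset$ inductively forces their column weight to vanish --- or more directly, $\q^\plu(c) = 0$ already makes their contribution zero.)

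Combining the two inequalities gives $\tfrac{1}{n} \cost(a, d) \leq \tfrac{3}{n} \cost(c^*_d, d)$, i.e., $\cost(a,d) \leq 3 \cost(c^*_d, d)$, and taking the supremum over $d \in \dom(\prof)$ yields $\dist_\elec(a) \leq 3$. The only delicate step is the bookkeeping in the averaging-over-$u$ argument --- one must verify that the normalization constants in $\pvec^\uni$ and $\qvec^\plu$ align so that the row-sum and column-sum substitutions yield matching factors of $1/n$. Other than this, the proof is a straightforward application of the triangle inequality combined with the two consistency inequalities $d(v, a) \leq d(v, c)$ (from admission) and $d(u, \top(u)) \leq d(u, c^*_d)$ (from the definition of top choice).
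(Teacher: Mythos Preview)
The paper does not give its own proof of this lemma; it is stated as a citation of Theorem~1 from \citet{gkatzelis:halpern:shah:resolving} (with earlier versions due to \citet{munagala:wang:improved} and \citet{DistortionDuality}). Your argument is correct and is essentially the standard one from those references: use the admission condition to get $d(v,a)\le d(v,c)$ on the support of $\mvec$, then route each $d(v,c)$ through $c^*_d$ and a voter $u$ with $\top(u)=c$, applying the triangle inequality and $d(u,c)\le d(u,c^*_d)$, and finally collapse the sums via the row and column marginals of $\mvec$.

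Two minor remarks. First, your aside that ``Hall's condition applied to $T=\emptyset$ inductively forces their column weight to vanish'' is not the right justification (Hall's condition for $T=\emptyset$ is vacuous); your second reason is the correct one: $q^\plu(c)=0$ directly forces $\sum_v \m{v,c}=0$, so those columns contribute nothing. Second, your caution about the normalization constants is well placed --- the paper in fact writes $q^\plu(c)=\plu(c)/m$, which appears to be a typo for $\plu(c)/n$; with the intended normalization the row and column sums both produce factors of $1/n$ exactly as you need.
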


\section{The Veto Core and its Equivalence to Dominant Candidates}
\label{sec:vetocore}
In this section, we review the classical notion of proportional veto core due to \citet{moulinProportionalVetoPrinciple1981}, then introduce a~generalization called the $\pq$-veto~core and show that it is equivalent to the set of $\pq$-dominant candidates. 

\subsection{The Proportional Veto Core and its Generalization} \label{sec:proportional}

The \emph{proportional veto principle} requires that, for a winning candidate $a$ of an election \election, the following must hold: if $T \subseteq V$ is a~coalition of voters comprising an $\alpha$ fraction of voters (i.e., $\SetCard{T}/n = \alpha)$, then the union of candidates ranked weakly lower than $a$ by voters in $T$ should comprise at least an $\alpha$ fraction of candidates (i.e., $\SetCard{\neigh_a(T)}/m \ge \alpha$).
Intuitively, this provides protection against overly bad candidates, for every possible groups of voters, proportional to their size.
\citet{moulinProportionalVetoPrinciple1981} defined the proportional veto core as the set of candidates not blocked in this way:

\begin{definition}[\emph{Proportional Veto Core} \citep{moulinProportionalVetoPrinciple1981}] \label{def:prop-veto-core}
Given an election $\elec = (V, C, \prof)$, we say that the coalition $T$ \emph{blocks} the candidate $c$ if there exists a subset of candidates $B$ with $B \succ_T c$ and
\[\ceil[\Bigg]{m \cdot \frac{|T|}{n}} - 1 \geq m - |B|.\]
The set of all candidates that are \emph{not} blocked is called the \emph{proportional veto core}.	
\end{definition}

\citet{moulinProportionalVetoPrinciple1981} shows that, for every election, the proportional veto core is non-empty and does~not contain any candidate who is ranked lowest by more than $\frac{n}{m}$ voters.

\medskip

We now generalize the definition to arbitrary weights for voters and candidates.

\begin{definition}[\emph{$\pq$-Veto Core}] \label{def:veto-core}
Given an election $\elec = (V, C, \prof)$ along with weights vectors $\pvec \in \Delta(V)$ and $\qvec \in \Delta(C)$, we say that coalition $T$ $\pq$-\emph{blocks} candidate $c$ if there exists a subset of candidates $B$ with $B \succ_T c$ and $p(T) > 1 - q(B)$. 
We refer to $B$ as a \emph{witness} that $T$ blocks $c$.
The set of all candidates that are \emph{not} $\pq$-blocked is called the~$\pq$-\emph{veto core}.
\end{definition}

We first show that the $\pq$-veto core indeed generalizes the proportional veto core.

\begin{proposition}
	For every election \election, the $(\pvec^\mathsf{uni}, \qvec^\mathsf{uni})$-veto core equals the proportional veto core (where  $p^\mathsf{uni}(v) = 1/n$ for all $v \in V$ and $q^\mathsf{uni}(c) = 1/m$ for all $c \in C$).
\end{proposition}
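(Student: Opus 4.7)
The plan is to show that, under the uniform weights, the definition of $\pq$-blocking becomes arithmetically equivalent to Moulin's definition of blocking, candidate by candidate and coalition by coalition. Since the two cores are defined as the complements of the sets of blocked candidates, this equivalence immediately gives the desired equality.

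First, I would substitute the uniform weights into the $\pq$-veto core condition. With $p^\mathsf{uni}(v) = 1/n$ and $q^\mathsf{uni}(c) = 1/m$, we have $p(T) = |T|/n$ and $q(B) = |B|/m$. Hence $T$ $(\pvec^\mathsf{uni}, \qvec^\mathsf{uni})$-blocks $c$ via witness $B$ (with $B \succ_T c$) if and only if
\[
\frac{|T|}{n} > 1 - \frac{|B|}{m},
\quad\text{equivalently}\quad
\frac{m|T|}{n} > m - |B|.
\]
On the other hand, $T$ blocks $c$ in the sense of \cref{def:prop-veto-core} via witness $B$ if and only if
\[
\ceil[\Big]{\tfrac{m|T|}{n}} - 1 \ge m - |B|,
\quad\text{equivalently}\quad
\ceil[\Big]{\tfrac{m|T|}{n}} \ge (m - |B|) + 1.
\]

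The key step is the elementary observation that, for any real $x \ge 0$ and any integer $k$, we have $\lceil x \rceil \ge k+1$ iff $x > k$. Applying this with $x = m|T|/n$ and the integer $k = m - |B|$, the two displayed conditions above are equivalent. Thus, for every fixed witness set $B \subseteq C$ with $B \succ_T c$, the $(\pvec^\mathsf{uni}, \qvec^\mathsf{uni})$-blocking condition is equivalent to Moulin's blocking condition.

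Consequently, a candidate $c$ is blocked in the sense of \cref{def:prop-veto-core} (by some coalition $T$ and witness $B$) if and only if $c$ is $(\pvec^\mathsf{uni}, \qvec^\mathsf{uni})$-blocked (by the same $T$ and $B$). Taking complements over $C$, the proportional veto core coincides with the $(\pvec^\mathsf{uni}, \qvec^\mathsf{uni})$-veto core. I expect no real obstacle here; the only subtlety is the ceiling-versus-strict-inequality translation, which is handled cleanly by the integrality of $m - |B|$.
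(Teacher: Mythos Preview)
Your proposal is correct and follows essentially the same approach as the paper: both proofs substitute the uniform weights, reduce the comparison to the equivalence $\lceil m|T|/n \rceil - 1 \ge m - |B| \iff m|T|/n > m - |B|$, and justify this by the integrality of $m - |B|$. Your ceiling-versus-strict-inequality observation is exactly the paper's key step, just phrased slightly more generally (real $x$ and integer $k$ rather than natural numbers).
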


\begin{proof}
	First, observe that $\ceil[\big]{\frac{x}{y}} - 1 \ge z$ if and only if $\frac{x}{y} > z$ for all natural numbers $x, y, z \in \N$.
	Candidate $c$ is is blocked (in the traditional sense) if and only if there exists a coalition $T$ and a subset of candidates $B$ such that $B \succ_T c$ and $\ceil[\Big]{m \cdot \frac{|T|}{n}} - 1 \geq m - |B|$.
	Because all quantities are integers, this holds if and only if $m \cdot \frac{|T|}{n} > m - |B|$, which in turn holds if and only if $c$ is $(\pvec^\mathsf{uni}, \qvec^\mathsf{uni})$-blocked by $T$ with witness $B$.
\end{proof}

\subsection{Relation to Dominant Candidates} \label{sec:equivalance}

We now show that the $\pq$-veto core equals the set of $\pq$-dominant candidates. 

\begin{theorem} \label{thm:eq}
  For every election \election and for any weight vectors $\pvec \in \Delta(V)$ and $\qvec \in \Delta(C)$, the $\pq$-veto core equals the set of $\pq$-dominant candidates.
\end{theorem}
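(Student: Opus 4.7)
The plan is to prove both containments by going through Hall's condition (Lemma 1), which characterizes $\pq$-dominance as: $p(T) \le q(\neigh_a(T))$ for every coalition $T \subseteq V$. Thus I want to show that the failure of this inequality for some $T$ is equivalent to the existence of a blocking coalition--witness pair $(T, B)$ for $a$.

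For the first direction (not in the $\pq$-veto core implies not $\pq$-dominant), I would start with a coalition $T$ and witness $B$ such that $B \succ_T a$ and $p(T) > 1 - q(B)$. The key observation is that $B \succ_T a$ forces $B$ and $\neigh_a(T)$ to be disjoint: any $b \in B$ satisfies $b \succ_v a$ for every $v \in T$, so $a \not\succeq_v b$, hence $b \notin \neigh_a(v)$. Consequently $\neigh_a(T) \subseteq C \setminus B$, giving $q(\neigh_a(T)) \le 1 - q(B) < p(T)$, which violates Hall's condition and hence rules out $\pq$-dominance of $a$.

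For the converse direction, assuming $a$ is not $\pq$-dominant, Hall's condition yields a coalition $T$ with $p(T) > q(\neigh_a(T))$. I would take the witness to be the complement $B := C \setminus \neigh_a(T)$. Then $q(B) = 1 - q(\neigh_a(T))$, so the inequality $p(T) > q(\neigh_a(T))$ rewrites exactly as $p(T) > 1 - q(B)$. To verify $B \succ_T a$, note that any $b \in B$ is, by definition, not in $\neigh_a(v)$ for any $v \in T$, which means $a \not\succeq_v b$, i.e., $b \succ_v a$. Finally, I would briefly address the implicit non-degeneracy needed for $B$ to serve as a valid witness: since $p(T) \le 1$, the strict inequality $p(T) > q(\neigh_a(T))$ forces $q(\neigh_a(T)) < 1$, so $B$ is non-empty (and $T$ is non-empty since $p(T) > 0$).

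I do not expect a serious obstacle here: once one reads off the two sides through Hall's lemma, the argument is essentially a passage to complements in $C$, with the only mild subtlety being to confirm that the extremal cases (empty $B$ or empty $T$) are automatically excluded by the strict inequality in the definition of blocking. The hardest part is arguably just making the "disjointness of $B$ and $\neigh_a(T)$" observation clean, since it hinges on the fact that $\neigh_a(v)$ is defined by weak preference ($a \succeq_v c$) while the witness uses strict preference ($b \succ_v a$).
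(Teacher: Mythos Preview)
Your proposal is correct and follows essentially the same route as the paper: both directions are obtained by passing through Hall's condition (Lemma~\ref{lem:hall}) and using the complement $B = C \setminus \neigh_a(T)$ as the canonical witness, with the key observation being that $B \succ_T a$ is equivalent to $B \cap \neigh_a(T) = \emptyset$. Your additional remark that the strict inequality automatically rules out the degenerate cases is a nice bit of extra care, though the paper's Definition~\ref{def:veto-core} does not actually require $B$ to be non-empty, so this check is not strictly necessary.
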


\begin{proof} 
	For any election $\elec = (V, C, \prof)$ and any weight vectors $\pvec \in \Delta(V)$ and $\qvec \in \Delta(C)$, we show that candidate $a$ is not $\pq$-dominant if and only if $a$ is not in the $\pq$-veto core of $\elec$.

	First, assume that $a$ is not $\pq$-dominant.
	Then, by Hall's condition (\cref{lem:hall}), there exists a~coalition $T$ of voters such that $p(T) > q(\neigh_a(T)) = 1 - q(C \setminus \neigh_a(T))$. 
	Since $C \setminus \neigh_a(T) = \Set{c \in C}{c \succ_T a}$ by definition, $C \setminus \neigh_a(T) \succ_T a$.
	Hence, coalition $T$ $\pq$-blocks $a$ with witness $C \setminus \neigh_a(T)$. 
	Thus, we have shown that $a$ is not in the  $\pq$-veto core.
	
	For the converse direction, assume that candidate $a$ is not in the $\pq$-veto core, i.e., there exists a coalition $T$ that $\pq$-blocks $a$.
	Let $B$ be a witness for $T$ $\pq$-blocking $a$, i.e., $B \succ_T a$ and $p(T) > 1 - q(B)$.
        Because $B \subseteq \Set{c \in C}{c \succ_T a} = C \setminus \neigh_a(T)$, we can bound its weight by $q(B) \leq q(C \setminus \neigh_a(T))$.
	This implies that Hall's condition (\cref{lem:hall}) is violated for the coalition $T$ as follows:
	$p(T) > 1-q(B) \geq 1- q((C \setminus \neigh_a(T))) = q(\neigh_a(T))$.
  	Thus, we have shown that $a$ is not $\pq$-dominant. 
\end{proof}

\begin{corollary} \label{cor:witness}
A candidate $c$ is in the $\pq$-veto core if and only if $c$ admits a $\pq$-matching. 
\end{corollary}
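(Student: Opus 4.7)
The plan is to observe that this corollary is an essentially immediate consequence of \cref{thm:eq} combined with the \emph{definition} of $\pq$-dominance given earlier in \cref{sec:domination-graphs}. Specifically, a candidate $a$ was defined to be $\pq$-dominant precisely when there exists a $\pq$-matching that $a$ admits. So the only work to do is to chain this definitional equivalence with the set-level equivalence established in \cref{thm:eq}.

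Concretely, I would write two short implications. For the forward direction, suppose $c$ lies in the $\pq$-veto core; by \cref{thm:eq}, $c$ is $\pq$-dominant, and unpacking the definition of $\pq$-dominance yields a $\pq$-matching $\mvec$ that $c$ admits. For the converse, suppose $c$ admits some $\pq$-matching $\mvec$; then $c$ is $\pq$-dominant by definition, and applying \cref{thm:eq} places $c$ in the $\pq$-veto core.

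There is no substantive obstacle here: the corollary is essentially a restatement of \cref{thm:eq} that makes the matching-based witness explicit. The only minor point worth flagging in the write-up is that ``admits a $\pq$-matching'' is merely shorthand for the existential statement defining $\pq$-dominance, so the proof reduces to a one-line invocation of \cref{thm:eq}.
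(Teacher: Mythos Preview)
Your proposal is correct and matches the paper's approach: the corollary is stated immediately after \cref{thm:eq} without a separate proof, because it is nothing more than \cref{thm:eq} combined with the definition of $\pq$-dominance from \cref{sec:domination-graphs}. Your two-line chaining of the definitional equivalence with the theorem is exactly the intended argument.
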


\section{A Class of Voting Rules with Ties Characterizing The Veto Core}
\label{sec:ties}
In this section, we show that the $\pq$-veto core is exactly characterized by the set of candidates who can be elected by a natural iterative elimination process generalizing \pluralityveto.
Defining this generalized algorithm correctly requires us to remedy flaws in \pluralityveto related to tie breaking.
These flaws are most clearly observed in two examples: 
(1) In \cref{ex:obvious-tie}, only a single winner --- depending on the veto order --- is chosen, even though the election should clearly be declared a tie. 
(2) In \cref{ex:resolvability}, \pluralityveto never elects candidate $b$ since $\plu(b)=0$, even though $b$ is a winner of \pluralitymatching (i.e., a plurality dominant candidate);
thus, candidates are not treated equally under the sufficient condition of achieving distortion 3 (given in \cref{lem:opt}).
We propose a generalized version of \pluralityveto with a less aggressive elimination criterion, which will remedy all of these flaws simultaneously.

Before doing so, we approach the issue of tied winners from a different angle. 
\cref{cor:witness} indicates that membership of a candidate $a$ in the $\pq$-veto core can be certified by a $\pq$-matching \mvec that $a$ admits.
If we view $\mvec$ as a ``justification'' for electing $a$, then this justification equally suggests electing any other candidate $c$ admitting $\mvec$.
This view motivates the following definition.

\begin{definition}[Ties]
  Given a $\pq$-matching $\mvec$, the set $W(\mvec) = \Set{c \in C}{c \text{ admits } \mvec}$ is called the set of \emph{winners tied at} $\mvec$.
\end{definition}

\subsection{Ties are Convex} \label{sec:convex}

The \emph{combinatorial} definition of ties based on matchings turns out to imply a natural geometric condition on the candidates' embedding.
We show that for every embedding of the candidates (consistent with the given rankings) into a~metric space allowing the definition of convexity, $W(\mvec)$ is convex for all $\pq$-matchings $\mvec$.
This is desirable for the goal of electing a central candidate: if we think of the metric space as capturing political positions, then convexity says that if more extreme candidates \emph{in all directions} can be elected, a~candidate who is moderate in the sense of interpolating their positions should also be elected.
Convexity in all metric spaces consistent with the given rankings is implied by the following property:

\begin{definition}[Prefix-Intersection]
  Given an election \election, a subset of candidates $A$ is \emph{\uc} if
  there exist indices $k_v \in [0, m]$ for all voters $v$ such that $A$ is the intersection of the sets of top $k_v$ candidates for each voter $v \in V$.
\end{definition}

We first show that \uc sets of candidates are convex for any metric embedding consistent with the given rankings.

\begin{proposition} \label{prop:convexity}
  Given an election \election, if $A$ is a \uc set of candidates, then for every metric space consistent with $\prof$ for which convexity is defined, $A$ is convex, in the sense that the~interior of the convex hull of $A$ contains no point from $C \setminus A$.
\end{proposition}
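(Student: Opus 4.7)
I will prove the contrapositive: every $c \in C \setminus A$ lies outside the interior of $\mathrm{conv}(A)$. The \uc structure of $A$ lets me localize a single voter whose ranking separates $c$ from $A$. Concretely, fix indices $k_v$ witnessing that $A$ is \uc, i.e., $A = \bigcap_{v \in V} T_v$ where $T_v$ denotes the set of top $k_v$ candidates of voter~$v$. Since $c \notin A$, there exists a voter $v^*$ with $c \notin T_{v^*}$; because $A \subseteq T_{v^*}$, every $a \in A$ is ranked strictly above $c$ by $v^*$, and consistency with $\succ_{v^*}$ then yields $d(v^*, a) \le d(v^*, c)$ for every $a \in A$.

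Next, I would set $r := d(v^*, c)$ and invoke convexity of closed metric balls. In any normed space (and more generally in any ambient space where convexity of sets is meaningfully defined --- which already requires closed balls to be convex via the triangle inequality), the closed ball $\overline{B}(v^*, r) = \{x : d(v^*, x) \le r\}$ is convex, so the inclusion $A \subseteq \overline{B}(v^*, r)$ propagates to $\mathrm{conv}(A) \subseteq \overline{B}(v^*, r)$. Taking interiors (which is monotone under set inclusion) then gives
\[
\mathrm{int}(\mathrm{conv}(A)) \;\subseteq\; \mathrm{int}\bigl(\overline{B}(v^*, r)\bigr) \;=\; B(v^*, r),
\]
the corresponding open ball. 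Since $d(v^*, c) = r$ places $c$ on the bounding sphere rather than inside this open ball, we conclude $c \notin \mathrm{int}(\mathrm{conv}(A))$, as desired.

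The only real subtlety is pinning down what ``every metric space for which convexity is defined'' is meant to permit: the argument uses nothing beyond convexity of closed balls and the basic topological fact that interior is monotone under inclusion. I expect no deeper obstacle --- once the \uc certificate hands us the separating voter~$v^*$, the conclusion follows simply from the fact that $A$ sits inside a closed ball around $v^*$ while $c$ lies on its boundary.
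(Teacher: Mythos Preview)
Your proof is correct and follows essentially the same approach as the paper: both arguments rest on the fact that each prefix set $T_v$ sits inside a closed metric ball around $v$, so $A$ (and hence $\mathrm{conv}(A)$) is contained in each such ball, while any $c \notin A$ lies on the boundary of at least one of them and thus outside the corresponding open ball. The paper phrases this globally (``the interior of each ball is convex, so the intersection of the interiors is convex''), whereas you take the contrapositive and isolate a single separating voter $v^*$; your version is more explicit and arguably cleaner, but the underlying idea is identical.
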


\begin{proof}
  For each voter $v$, the set of the top $k_v$ candidates forms a ball in any metric space --- except that some of the candidates on the boundary of the ball may be included while others are excluded (due to indifferences in the distance-based ranking which are resolved arbitrarily in the~reported rankings).
  The interior of each ball is a convex set, and thus, so is the intersection of the~interiors of the balls.
\end{proof}

We now show that the set of winners tied at any $\pq$-matching is \uc.

\begin{proposition} \label{prop:ties}
  For every $\pq$-matching $\mvec$, the set $W(\mvec)$ of winners tied at $\mvec$ is \uc.
\end{proposition}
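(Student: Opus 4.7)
The plan is to unpack what it means for a candidate to admit $\mvec$ and then exhibit explicit prefix thresholds. For each voter $v \in V$, I define a threshold $k_v$ as follows: if $[\mrow{v} \neq 0]$ is non-empty, let $c_v = \top_{[\mrow{v} \neq 0]}(v)$ be the top-ranked candidate with whom $v$ has positive matching weight, and let $k_v$ be the rank of $c_v$ in $v$'s ranking, so that the top $k_v$ candidates of $v$ are exactly those $a$ satisfying $a \succeq_v c_v$. If instead $\mrow{v}$ is the zero vector, set $k_v = m$.

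The first step is to show that $a \in W(\mvec)$ if and only if $a$ lies among the top $k_v$ candidates of $v$ for every $v \in V$. By the definition recalled in \cref{sec:domination-graphs}, $a$ admits $\mvec$ iff $a \succeq_v c$ for every pair $(v,c)$ with $\m{v,c} \neq 0$. Because $\succ_v$ is a total order on a finite set, the universal quantifier over $c \in [\mrow{v} \neq 0]$ collapses to the single condition $a \succeq_v c_v$ whenever the row is non-empty, and is vacuous whenever $\mrow{v} = \mathbf{0}$; the former matches membership in the top-$k_v$ prefix of $v$, while the latter is matched by the convention $k_v = m$.

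The second step is then immediate: writing $T_{k_v}(v)$ for the set of top $k_v$ candidates of $v$, the equivalence above yields
\[ W(\mvec) \;=\; \bigcap_{v \in V} T_{k_v}(v), \]
which is exactly the definition of a \uc set with thresholds $(k_v)_{v \in V}$.

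The only delicate point I anticipate is the edge case of voters whose matching row is identically zero; these impose no admissibility constraint on $a$, so they must be absorbed into the intersection via $k_v = m$ rather than omitted. Beyond that, the argument reduces to the elementary observation that being weakly preferred to every element of a finite totally-ordered set is equivalent to being weakly preferred to its maximum, so I do not expect any real obstacle.
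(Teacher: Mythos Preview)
Your proof is correct and follows essentially the same approach as the paper: both use the characterization $a \in W(\mvec) \iff a \succeq_v \top_{[\mrow{v} \neq 0]}(v)$ for all $v$, observe that each such condition defines a prefix of $v$'s ranking, and take the intersection. You are in fact slightly more careful than the paper in explicitly handling the edge case $\mrow{v} = \mathbf{0}$ (which occurs exactly when $p(v)=0$) via $k_v = m$; the paper's proof leaves this implicit.
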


\begin{proof}
	By definition (see \cref{sec:domination-graphs} for this characterization), the set of winners tied at $\mvec$ is $W(\mvec) = \Set{c \in C}{c \succeq_v \top_{[\mrow{v} \neq 0]}(v) \text{ for all } v \in V}$.
	For every voter $v$, note that the set $T_v = \Set{c \in C}{c \succeq \top_{[\mrow{v} \neq 0]}(v)}$ is the set of top $k_v$ choices for some $k_v \in [1, m]$. 
	Therefore, $W(\mvec) = \bigcap_{v \in V} T_v$ is \uc by definition.
\end{proof}

\citet{PluralityVeto} conjectured that the $\pq$-veto core is convex in all metric spaces consistent with the rankings, i.e., not only $W(\mvec)$ but also the union of all the $W(\mvec)$.\footnote{The conjecture was originally phrased only for the special case of the \emph{Peer Selection} setting, in which the set of voters is the~same as the set of candidates (due to the aforementioned tie breaking issues with \pluralityveto that will be resolved in \cref{sec:serialveto,sec:characterization}).}
We resolve this conjecture negatively in \cref{ex:convexity}.

\begin{figure}[t]
	\includegraphics[scale=0.25]{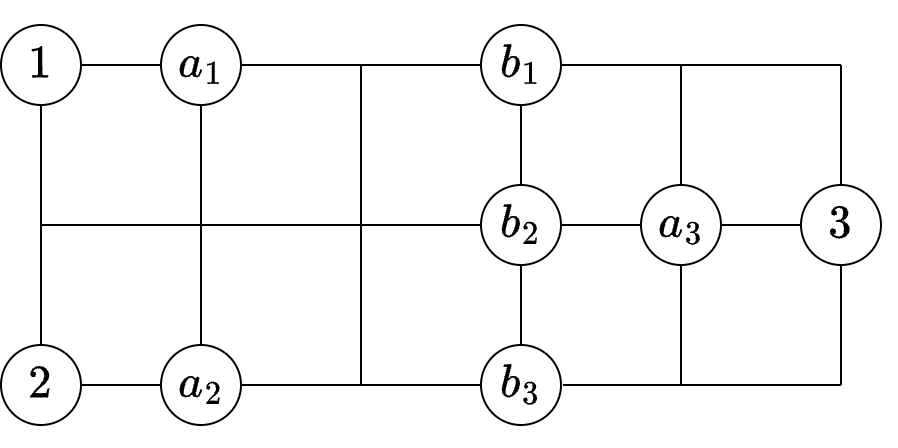}	
	\caption{A consistent embedding of the election in \cref{ex:convexity} into a 2-dimensional $\ell_1$-normed space such that the plurality veto core is non-convex.}
	\label{fig:convexity}
\end{figure}

\begin{example} \label{ex:convexity}
	There are three voters $V = \SET{1, 2, 3}$ and six candidates $C = \SET{a_1, a_2, a_3, b_1, b_2, b_3}$.  
	The~rankings are as follows: 
	\begin{itemize}
		\item $a_1 \succ_1 b_1 \succ_1 a_2 \succ_1 b_2 \succ_1 b_3 \succ_1 a_3$
		\item $a_2 \succ_2 b_3 \succ_2 a_1 \succ_2 b_2 \succ_2 b_1 \succ_2 a_3$
		\item $a_3 \succ_3 b_2 \succ_3 b_1 \succ_3 b_3 \succ_3 a_1 \succ_3 a_2$
	\end{itemize}
	Observe that $b_1$ and $b_3$ respectively admit the plurality matchings $\SET{(1, a_2), (2, a_3), (3, a_1)}$ and $\SET{(1, a_3), (2, a_1), (3, a_2)}$, while $b_2$ does not admit any plurality matching, as voters $1$ and $2$ can be only matched to $a_3$ but $\plu(a_3) = 1$. 
	Thus, $b_1$ and $b_3$ are in the plurality veto core while $b_2$ is not.
	In~\cref{fig:convexity}, we illustrate a 2-dimensional $\ell_1$-normed space consistent with these rankings where $b_2$ is a~convex combination of $b_1$ and $b_3$, specifically $b_2 = \frac{b_1 + b_3}{2}$.
	This example disproves the~conjecture that the plurality veto core is always convex.\footnote{While \cref{ex:convexity} disproves the conjecture in general, it remains open for the special case of the peer selection setting.} 
\end{example}

\subsection{\large \sequentialrule} \label{sec:serialveto}

We now present an adaptation of \pluralityveto (for the general case with arbitrary weights) with more careful elimination.
We present this class of rules only for integral weight vectors (instead of weights normalized to 1); when $\pvec, \qvec$ are rational, they can of course be made integral by multiplying all weights.
For irrational weights, the rules can be adapted, albeit at an increase in notational complexity. The corresponding results are also subsumed by our continuous-time approach in \cref{sec:anonrule}.
Given a set $S$ and a positive integer $N$, let $\Delta_N(S)$ denote the domain of integral weight vectors over $S$ that add up $N$.

\begin{definition}[Veto Order]
	Given an election \election along with a weight vector $\pvec \in \Delta_N(V)$, a~\emph{veto order} under $\pvec$ is a repeated sequence of voters $(v_1, \ldots, v_N)$ in which each voter $v \in V$ occurs exactly $\p{v}$ times.
\end{definition}

\sequentialrule with veto order $(v_1, \ldots, v_N)$, given in \cref{alg:sequential}, starts off with the set of all candidates as eligible winners.
Then, the rule processes voters in the order $(v_1, \ldots, v_N)$.
In each iteration, the voter being processed starts from her bottom choice among not-yet-eliminated candidates and eliminates those whose weight is already 0.  
When she arrives at a candidate $c$ with positive weight, she decrements his weight by 1, \emph{but does not eliminate $c$, even if his weight is now 0}.  
The set of winners consists of all candidates that are not eliminated until the end. 

\begin{theorem} \label{thm:sequential-outcome}
	\sequentialrule (\cref{alg:sequential}) returns a non-empty subset of candidates $W$ consisting of the~winners tied at a $\pq$-matching.
\end{theorem}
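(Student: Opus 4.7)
The plan is to construct a witnessing $\pq$-matching $\mvec$ directly from the execution trace of \sequentialrule and then identify $W$ with $W(\mvec)$. For each voter $v$ and candidate $c$, I would set $\m{v,c}$ to be the total number of iterations (across $v$'s $\p{v}$ appearances in the veto order) in which $v$ decrements $c$'s weight by one. Since each iteration performs exactly one decrement and $v$ is processed $\p{v}$ times, the row sums $\sum_c \m{v,c} = \p{v}$ follow immediately. For the column sums I would track the invariant that after $k$ completed iterations the combined weight on the non-eliminated candidates equals $N - k$: eliminated candidates carry weight $0$ at the instant of their elimination, and each iteration removes exactly one unit from the running total via its single decrement. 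After all $N$ iterations the total remaining weight is $0$, so every candidate in $W$ must also end at weight $0$. Since every candidate's weight reaches $0$ at some point, the total number of decrements applied to $c$ is exactly $\q{c}$, giving $\sum_v \m{v,c} = \q{c}$.

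Next, I would show $W \subseteq W(\mvec)$, i.e., each $w \in W$ satisfies $w \succeq_v c$ whenever $\m{v,c} > 0$. Consider the moment in some appearance of voter $v$ at which she decrements $c$. She reached $c$ by starting from her bottom choice among the currently non-eliminated candidates and successively eliminating those of weight $0$. Since candidates in $W$ are never eliminated at any point during the algorithm, the non-eliminated set at this moment contains $W$, and each candidate $v$ passed over before reaching $c$ is ranked strictly below $c$ by $v$. Hence every $w \in W$ satisfies $w \succeq_v c$, as required for admissibility.

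For the reverse inclusion $W(\mvec) \subseteq W$, consider any eliminated candidate $c^* \notin W$ and the iteration (say, by voter $v'$) in which $c^*$ gets eliminated. The invariant guarantees that the total remaining weight at the start of every iteration $k \leq N$ is $N - k + 1 \geq 1$, so $v'$'s processing terminates with a genuine decrement of some candidate $c$ of positive weight; and $c$ is ranked strictly above $c^*$ by $v'$ since $v'$ reached $c$ by moving up past $c^*$. Because $\m{v', c} > 0$ while $c \succ_{v'} c^*$, this shows that $c^*$ does not admit $\mvec$. The same invariant yields non-emptiness of $W$: the candidate decremented in the final iteration is not eliminated afterwards, so it lies in $W$. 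The main subtlety I anticipate is the column-sum bookkeeping — specifically, justifying that both eliminated candidates and surviving candidates end at weight exactly $0$ — but both assertions reduce to the facts that weight only decreases through per-iteration decrements and that elimination requires zero weight.
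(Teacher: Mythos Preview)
Your proposal is correct and follows essentially the same approach as the paper. The paper organizes the argument slightly differently---defining $C_i = \Set{c}{c \succeq_{v_i} c_i}$ and showing $W = \bigcap_{i=1}^N C_i = W(\mvec)$ in one stroke---whereas you prove the two inclusions $W \subseteq W(\mvec)$ and $W(\mvec) \subseteq W$ separately, but the underlying reasoning (every surviving candidate is weakly preferred to every decremented candidate by the corresponding voter; every eliminated candidate is strictly below some later-decremented candidate for the eliminating voter) is identical.
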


\begin{proof}
	The proof is similar to that of Theorem~1 and Theorem~2 in \citet{PluralityVeto}.
	We first show that \cref{alg:sequential} returns a non-empty subset of candidates.
	The weights of candidates initially add up to $N$. 
	In each of the $N$ iterations of the \textbf{for} loop (lines \ref{alg:for-start}--\ref{alg:for-end}), a positive weight is decremented by 1.
	Hence, there is a candidate $w$ whose weight reaches 0 only at the end of the $N^{\text{th}}$ iteration.
	Since a candidate can be removed from $W$ only after his weight reaches 0, $w \in W$ when the algorithm terminates.
	Thus, \cref{alg:sequential} returns a non-empty subset of candidates. 

	Next, we show that $W$ is the set of winners tied at some $\pq$-matching when the algorithm terminates.
	Let $c_i$ denote the candidate whose weight is decremented in the $i^{\text{th}}$ iteration of the \textbf{for} loop (lines \ref{alg:for-start}--\ref{alg:for-end}).
	Let $\mvec \in \Delta_N(V \times C)$ be defined by
	$\m{v, c} = |\Set{ i \in [1, N]}{ v = v_i \text{ and } c = c_i}|$ 
	for all $v \in V$ and $c \in C$. 
	Note that $\mvec$ is a $\pq$-matching since each voter $v$ occurs $\p{v}$ times in the veto order and the weight of each candidate $c$ (initialized with $\q{c}$) reaches 0 at the end.
	We will show that \cref{alg:sequential} returns the winners tied at $\mvec$.
	
	For all $i \in [1, N]$, let $C_i$ denote the candidates that voter $v_i$ ranks weakly higher than $c_i$, i.e., $C_i = \Set{c \in C}{c \succeq_{v_i} c_i}$. 
	All candidates are initially in $W$, and for all $i \in [1, N]$, in the $i^{\text{th}}$ iteration of the \textbf{for} loop (lines \ref{alg:for-start}--\ref{alg:for-end}), candidates that $v_i$ ranks lower than $c_i$ (i.e., candidates that are not in $C_i$) are removed by the \textbf{while} loop (lines \ref{alg:while-start}--\ref{alg:while-end}). 
	By induction on the iterations, $W = \bigcap_{i=1}^N C_i$ at the end.
	By definition, the set of winners tied at $\mvec$ is 
	$W(\mvec) = \Set{c \in C}{c \succeq_{v} \top_{[\mvec(v) \neq 0]}(v) \text{ for all } v \in V} = \Set{c \in C}{c \succeq_{v_i} c_i \text{ for all } i \in [1, N]} = \bigcap_{i=1}^N C_i$, which completes the proof.
\end{proof}

\begin{algorithm}[t]
\caption{\sequentialrule}
\label{alg:sequential}

\begin{algorithmic}[1]

\Statex{\: \ \textbf{Input:}\: An election $\elec = (V, C, \prof)$,}  
\Statex{\phantom{\: \ \textbf{Input:}\:} an integral weight vector $\qvec \in \Delta_N(C)$,}
\Statex{\phantom{\: \  \textbf{Input:}\:} a veto order $(v_1, \ldots, v_N)$ under $\pvec \in \Delta_N(V)$.}

\medskip
\hrule
\bigskip

\Statex{\textbf{Output:}\: A non-empty subset of candidates $W$ in the $\pq$-veto core of $\elec$}
\Statex{\phantom{\textbf{Output:}\:} who are the winners tied at a $\pq$-matching $\mvec$, i.e., $W = W(\mvec)$.}

\medskip
\hrule
\bigskip

\State{\textbf{initialize} $W = C$} 
\State{\textbf{initialize} $\weight(c) = \q{c}$ for all $c \in C$}

\medskip

\For{$i = 1, \ldots, N$} \label{alg:for-start}
	\While{$\weight(\bot_W(v_i)) = 0$} \label{alg:while-start}
			\State{\textbf{remove} $\bot_W(v_i)$ \textbf{from} $W$} 
	\EndWhile \label{alg:while-end}
	\State{\textbf{decrement} $\weight(\bot_W(v_i))$ \textbf{by} $1$} 
\EndFor \label{alg:for-end}

\smallskip
	
\State{\textbf{return} $W$}
\end{algorithmic}
\end{algorithm}

For unit weights over voters and plurality scores over candidates, notice that a veto order corresponds to an ordering of voters and \sequentialrule (\cref{alg:sequential}) returns a set of winners containing the winner of \pluralityveto for the same ordering.
Naturally, in this special case, every possible winner is guaranteed to have distortion 3 by \cref{lem:opt}.

Unlike \pluralityveto, the voting rule proposed by \citet{moulinProportionalVetoPrinciple1981} for electing a candidate from the veto core does not insist on picking a single winner.
However, while it does allow for ties, it~frequently fails to include multiple candidates who should be ``obviously'' tied; in particular, it~does~not always return candidates tied at a common matching.
Indeed, the ties allowed by Moulin's rule appear to be less by design\footnote{For example, the rule cannot return tied candidates when $n$ and $m$ are relatively prime.} than an artifact of the combination of two choices: (1)~eliminating a~candidate when his score reaches 0, and (2) insisting that the winner not be eliminated at all (rather than only at the end). 
 We believe that the combination of these two choices is the~primary reason why \cref{def:prop-veto-core} is a bit more convoluted than \cref{def:veto-core}, and it also introduces various number-theoretic arguments into the analysis of the rule.
  
\subsection{Characterization of the Veto Core via \sequentialrule} \label{sec:characterization}

\Cref{thm:sequential-outcome} guarantees that for every veto order $(v_1, \ldots, v_N)$, the outcome of running \sequentialrule for this veto order results in a set of winners tied at some $\pq$-matching.
Here, we show the~converse: for every $\pq$-matching $\mvec$, there exists a veto order under which the set of winners is $W \supseteq W(\mvec)$
In particular, this implies that the $\pq$-veto core is exactly characterized by the set of possible winners under \sequentialrule with suitably chosen veto orders.

\medskip

The main technical tool in our proof is the concept of a bottom trading cycle, similar to the~notion of top trading cycles widely used in fair allocation tasks.

\begin{definition}[Bottom Trading Cycles]
	Given an election \election along with weight vectors $\pvec \in \Delta_N(V)$ and $\qvec \in \Delta_N(C)$, a \emph{bottom trading cycle} of a $\pq$-matching $\mvec \in \Delta_N(V \times C)$ is an~alternating sequence of voters and candidates $\phi = (v_1, c_1, \ldots, v_k, c_k)$ such that
	\begin{itemize}
		\item $\m{v_i, c_i} > 0$ for all $i \in [1, k]$,
		\item $\bot_{[\qvec \neq 0]}(v_i) = c_{i+1}$ for all $i \in [1, k-1]$, and $\bot_{[\qvec \neq 0]}(v_k) = c_1$.
        \end{itemize}
                
	We use $\mhatvec = \mvec \circ \phi$ to denote the outcome of \emph{swapping along $\phi$}, defined as the same (integral) $\pq$-matching as $\mvec$ except
	
	\begin{itemize}
		\item $\mhat{v_i, c_i} = \m{v_i, c_i} - 1$ for all $i \in [1, k]$,
		\item $\mhat{v_i, c_{i+1}} = \m{v_i, c_{i+1}}+1$ for all $i \in [1, k-1]$, and $\mhat{v_k, c_1} = \m{v_k, c_1} + 1$.
	\end{itemize}
\end{definition}

\begin{figure}[t]
\includegraphics[scale=0.27]{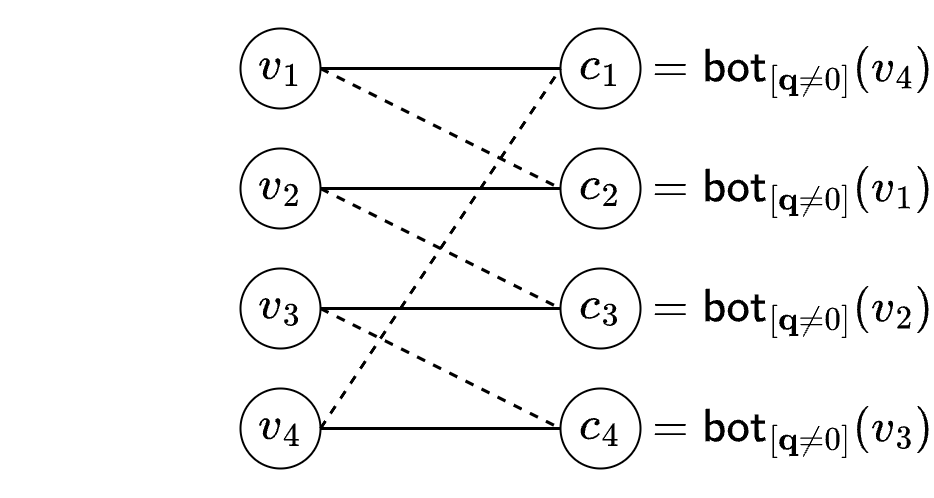}	
\caption{A bottom trading cycle $\phi = (v_1, c_1, \ldots, v_4, c_4)$ of a $\pq$-matching $\mvec$ is illustrated above. Notice that $\mvec \circ \phi$ is the same matching as $\mvec$ except that the weight on each bold edge is decremented by 1 and the~weight on each dotted edge is incremented by 1.}
\end{figure}

First, we show that if $\mvec$ is a matching in which \emph{no} voter has positive edge weight to her bottom choice, then $\mvec$ must contain a bottom trading cycle.

\begin{lemma} \label{lem:btc-exists}
	Given an election \election along with weight vectors $\pvec \in \Delta_N(V)$ and $\qvec \in \Delta_N(C)$, if $\mvec \in \Delta_N(V \times C)$ is a $\pq$-matching with $\m{v, \bot_{[\qvec \neq 0]}(v)} = 0$ for all $v \in V$, then $\mvec$ contains a~bottom trading cycle.
\end{lemma}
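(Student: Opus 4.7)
The plan is to construct a bottom trading cycle via a pigeonhole / functional-iteration argument on the set $[\qvec \neq 0]$ of positively-weighted candidates, much in the spirit of how \textsc{TopTradingCycles} extracts cycles from top pointers.

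The first step is to distill a self-map from the matching. For every $c \in [\qvec \neq 0]$ we have $\sum_{v \in V} \m{v, c} = \q{c} > 0$ because $\mvec$ is a $\pq$-matching, so we may fix some voter $v_c$ with $\m{v_c, c} > 0$. Applying the lemma's hypothesis to $v_c$ gives $\m{v_c, \bot_{[\qvec \neq 0]}(v_c)} = 0$, which forces $\bot_{[\qvec \neq 0]}(v_c) \neq c$. Hence the assignment $f : [\qvec \neq 0] \to [\qvec \neq 0]$ defined by $f(c) = \bot_{[\qvec \neq 0]}(v_c)$ is a well-defined, fixed-point-free self-map on a finite set.

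The second step is standard: starting from any $c \in [\qvec \neq 0]$ and iterating $f$ until two iterates first coincide, the sub-sequence between them yields a cycle $(c_1, c_2, \ldots, c_k)$ with $f(c_i) = c_{i+1}$ (indices taken mod $k$) and $k \geq 2$, since any length-$1$ cycle would be a fixed point of $f$. The interleaved sequence $\phi = (v_{c_1}, c_1, v_{c_2}, c_2, \ldots, v_{c_k}, c_k)$ is then a bottom trading cycle: the condition $\m{v_{c_i}, c_i} > 0$ is exactly the property used when choosing $v_{c_i}$, while the cyclic conditions $\bot_{[\qvec \neq 0]}(v_{c_i}) = c_{i+1}$ are simply $f(c_i) = c_{i+1}$ unpacked.

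I do not anticipate any real obstacle; the argument reduces to the elementary fact that every fixed-point-free map on a finite set contains a cycle of length at least two. The only two items worth double-checking carefully are (i) the existence of $v_c$, which is immediate from the column-sum property of $\pq$-matchings, and (ii) the absence of fixed points of $f$, which is a direct rephrasing of the lemma's hypothesis.
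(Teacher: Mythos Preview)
Your proof is correct and takes essentially the same approach as the paper's: both build a walk on $[\qvec \neq 0]$ by alternating ``pick a voter matched to $c$'' with ``move to that voter's bottom choice,'' then extract a cycle via pigeonhole on candidates. The only cosmetic difference is that you fix the voter choices $c \mapsto v_c$ upfront and phrase the walk as iterating a fixed-point-free self-map $f$, whereas the paper makes the voter choices on the fly; the paper also separates out the $N=1$ case explicitly, which in your formulation is absorbed into the observation that a fixed-point-free $f$ cannot exist on a singleton.
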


\begin{proof}
	Suppose that $\m{v, \bot_{[\qvec \neq 0]}(v)} = 0$ for all $v \in V$. 
	If $N=1$ then the lemma holds vacuously.
	Hence, assume that $N > 1$. 
	Then, there exist a voter $v_1$ and a candidate $c_1$ such that $\m{v_1, c_1} \neq 0$.
	For $i = 1, 2, \ldots$, let $c_{i+1} = \bot_{[\qvec \neq 0]}(v_{i})$ and let $v_{i+1}$ be an arbitrary voter such that $\m{v_{i+1}, c_{i+1}} \neq 0$.  
	Note that there always exists such a voter $v_{i+1}$ because $\m{v_{i}, c_{i+1}} = 0$ and $\q{c_{i+1}} > 0$ by definition.
	Since there are only $m$ candidates, after at most $m+1$ steps, a candidate must repeat; let $j \neq k$ be indices such that $c_j = c_k$.
	Then, $(v_j, c_j, v_{j+1}, c_{j+1}, \ldots, v_{k-1}, c_{k-1})$ is a~bottom trading cycle of $\mvec$, which completes the proof.
	\end{proof}

Next, we show that if a candidate admits a matching, and one swaps along a bottom trading cycle, the candidate admits the new matching as well.
      
\begin{lemma} \label{lem:btc-swap}
	Given an election \election along with weight vectors $\pvec \in \Delta_N(V)$ and $\qvec \in \Delta_N(C)$, 
	suppose that $\mvec \in \Delta_N(V \times C)$ is a $\pq$-matching with bottom trading cycle $\phi=(v_1, c_1, \ldots, v_k, c_k)$.
    If~candidate $a$ admits $\mvec$, then $a$ also admits $\mhatvec = \mvec \circ \phi$.
\end{lemma}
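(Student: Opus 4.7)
\extraproof{Lemma}
The plan is to verify the definition of admission directly on $\mhatvec$. Recall that candidate $a$ admits a $\pq$-matching $\mvec'$ iff $a \succeq_v c$ for every pair $(v,c)$ with $\m{v,c}' \neq 0$. Since $\mhatvec$ differs from $\mvec$ only on the edges appearing in the cycle $\phi$, the admission condition automatically carries over on all edges outside of $\phi$. On edges of the form $(v_i, c_i)$ (for $i \in [1,k]$), the weight can only have decreased, so if $\mhat{v_i,c_i} > 0$ then also $\m{v_i,c_i} > 0$, and the admission inequality $a \succeq_{v_i} c_i$ is inherited from $\mvec$. The only new constraints are on the ``swapped-in'' edges $(v_i, c_{i+1})$ for $i \in [1,k-1]$ and $(v_k, c_1)$, whose weights increased and which may have become positive for the first time.

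The core observation is a two-step chain of preferences. First, since $\m{v_i, c_i} > 0$ in $\mvec$ and $a$ admits $\mvec$, we have $a \succeq_{v_i} c_i$. Second, by definition of the bottom trading cycle, $c_{i+1} = \bot_{[\qvec \neq 0]}(v_i)$, i.e., $c_{i+1}$ is the least-preferred candidate of $v_i$ among those with positive $\qvec$-weight; and $c_i$ itself lies in $[\qvec \neq 0]$ because $\q{c_i} \geq \m{v_i, c_i} > 0$. Therefore $c_i \succeq_{v_i} c_{i+1}$. Transitivity of $\succeq_{v_i}$ then gives $a \succeq_{v_i} c_{i+1}$, as required. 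The same argument applied to $i = k$ with the wrap-around $\bot_{[\qvec \neq 0]}(v_k) = c_1$ handles the last swapped-in edge.

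I do not anticipate any serious obstacle: the lemma really is just unfolding two facts (admission of $\mvec$ by $a$, and the definition of bottom trading cycle) and chaining them through transitivity. The only minor subtlety is that one should not implicitly assume $\q{a} > 0$ — the argument above does not require $a$ to lie in $[\qvec \neq 0]$, precisely because we only use that $c_i$ does and that $a$ dominates $c_i$. Once that is spelled out, the verification that $a$ admits $\mhatvec$ on all edges (original, decremented, and swapped-in) is complete.
\endextraproof
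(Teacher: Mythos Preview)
Your proof is correct and follows essentially the same approach as the paper: both arguments observe that the only new edges in $\mhatvec$ are of the form $(v_i, \bot_{[\qvec \neq 0]}(v_i))$, and verify $a \succeq_{v_i} \bot_{[\qvec \neq 0]}(v_i)$ via the chain $a \succeq_{v_i} c_i \succeq_{v_i} \bot_{[\qvec \neq 0]}(v_i)$, using that $\q{c_i} \geq \m{v_i,c_i} > 0$ to place $c_i$ in $[\qvec \neq 0]$. Your explicit case split into unchanged, decremented, and swapped-in edges is slightly more detailed than the paper's writeup, but the substance is identical.
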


\begin{proof}
	For all $i \in [1, k]$, we show that $(v_i, \bot_{[\qvec \neq 0]}(v_i)) \in E_a$, which suffices to prove that candidate $a$ admits $\mhatvec$ since $\mhatvec$ is the same $\pq$-matching as $\mvec$ except that voter $v_i$ reassigns a~unit of her weight from candidate $c_i$ to $\bot_{[\qvec \neq 0]}(v_i)$.  
 	
 	Let $i \in [1, k]$.
 	Since candidate $a$ admits $\mvec$ and $\m{v_i, c_i} \neq 0$, we have $(v_i, c_i) \in E_a$, i.e., $a \succeq_{v_i} c_i$.  
	Note that $\q{c_i} \neq 0$ since $\m{v_i, c_i} \neq 0$. 
	Then, $c_i \succeq_{v_i} \bot_{[\qvec \neq 0]}(v_i)$ by definition.
	Combining the~two, we get that $a \succeq_{v_i} \bot_{[\qvec \neq 0]}(v_i)$.
	Thus, $(v_i, \bot_{[\qvec \neq 0]}(v_i)) \in E_a$. 
	This completes the proof.
\end{proof}

\begin{theorem} \label{thm:core-characterization-veto-order}
  	Let \election be an election with weight vectors $\pvec \in \Delta_N(V)$ and $\qvec \in \Delta_N(C)$, and let $\mvec \in \Delta_N(V \times C)$ be an (integral) $\pq$-matching. 
  	There exists a veto order $\sigma = (v_1, \ldots, v_N)$ under~$\pvec$ for which \sequentialrule (\cref{alg:sequential}) returns a subset of candidates $W$ with $W \supseteq W(\mvec)$.
\end{theorem}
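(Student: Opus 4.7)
The plan is to construct the desired veto order $\sigma = (v_1, \ldots, v_N)$ greedily, maintaining a current matching $\mvec^{(i)}$ that starts from $\mvec^{(0)} = \mvec$ and eventually becomes the zero matrix. At each step, $\mvec^{(i)}$ is an integral $(\pvec^{(i)}, \qvec^{(i)})$-matching where $\pvec^{(i)} \in \Delta_{N-i}(V)$ and $\qvec^{(i)} \in \Delta_{N-i}(C)$ are the remaining voter and candidate weights. The key observation enabling the construction is that, during the actual execution of \sequentialrule, a candidate with positive current weight is never removed from $W$; hence the bottom-in-$W$ positive-weight candidate of voter $v_i$ in iteration $i$ coincides with $\bot_{[\qvec^{(i-1)} \neq 0]}(v_i)$. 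Consequently, in order to make \sequentialrule decrement a prescribed sequence $c_1, \ldots, c_N$, it suffices to arrange $c_i = \bot_{[\qvec^{(i-1)} \neq 0]}(v_i)$ at every iteration.

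At iteration $i$ of the construction, I distinguish two cases. In \textbf{Case A}, some voter $v$ with $p^{(i-1)}(v) > 0$ satisfies $\mvec^{(i-1)}(v, \bot_{[\qvec^{(i-1)} \neq 0]}(v)) > 0$; I pick such a voter, set $v_i := v$ and $c_i := \bot_{[\qvec^{(i-1)} \neq 0]}(v)$, and form $\mvec^{(i)}$ by decrementing $\mvec^{(i-1)}$ at the entry $(v_i, c_i)$. In \textbf{Case B}, no such voter exists, so the hypothesis of \cref{lem:btc-exists} is satisfied by $\mvec^{(i-1)}$ (it holds trivially for voters of zero remaining weight), yielding a bottom trading cycle $\phi$; I then replace $\mvec^{(i-1)}$ by $\mvec^{(i-1)} \circ \phi$, after which every voter in $\phi$ fulfills the premise of Case A by the definition of the swap, so I proceed with Case A. The process terminates in exactly $N$ iterations, since each decrement reduces the total matching weight by one.

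For correctness, I will maintain the invariant $W(\mvec) \subseteq W(\mvec^{(i)})$ throughout the construction (including after any intermediate swap). The invariant holds at $i = 0$ trivially, is preserved by decrements since $\mvec^{(i)}$ is entrywise dominated by $\mvec^{(i-1)}$, and is preserved by swaps by \cref{lem:btc-swap}. Consequently, for each $i$ and each $a \in W(\mvec)$, since $\mvec^{(i-1)}(v_i, c_i) > 0$ by construction and $a$ admits $\mvec^{(i-1)}$, we have $(v_i, c_i) \in E_a$, i.e., $a \succeq_{v_i} c_i$.

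Since swaps preserve the row and column sums of the matching, the remaining weights $\pvec^{(i-1)}, \qvec^{(i-1)}$ in the planning phase coincide with the current weights during the actual execution of \sequentialrule; thus \sequentialrule decrements exactly $c_i = \bot_{[\qvec^{(i-1)} \neq 0]}(v_i)$ in iteration $i$. The only candidates that \sequentialrule can eliminate in iteration $i$ are those ranked strictly below $c_i$ by $v_i$, yet every $a \in W(\mvec)$ satisfies $a \succeq_{v_i} c_i$; therefore no candidate of $W(\mvec)$ is ever eliminated, yielding $W \supseteq W(\mvec)$. The main obstacle in this plan is handling Case B, but this reduces cleanly to invoking \cref{lem:btc-exists} and \cref{lem:btc-swap} about bottom trading cycles.
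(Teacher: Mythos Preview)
Your proof is correct and follows essentially the same approach as the paper: construct the veto order step by step, at each step either directly picking a voter whose bottom positive-weight candidate carries positive matching weight, or first applying a bottom-trading-cycle swap (via \cref{lem:btc-exists} and \cref{lem:btc-swap}) to create such a voter. The paper frames this as an induction on $N$ whereas you unroll it iteratively, and your argument that \sequentialrule actually decrements the planned $c_i$ in each iteration (because positive-weight candidates are never removed from $W$) is somewhat more explicit than the paper's.
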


\begin{proof}
    The proof is by induction on $N$.
	As the base case, consider the case where $N=0$. 
	Because no iterations are run, all candidates will be winners, establishing the claim.
	Now consider an election \election with $\pvec \in \Delta_N(V)$ and $\qvec \in \Delta_N(C)$ such that $N \geq 1$.     
	Let $\mvec$ be an integral $\pq$-matching, i.e., $\mvec \in \Delta_N(V \times C)$.

    If $\m{v, \bot_{[\qvec \neq 0]}(v)} = 0$ for all voters $v$, then \cref{lem:btc-exists} implies the existence of a bottom trading cycle $\phi$ of $\mvec$.
	Let $\mhatvec = \mvec \circ \phi$.
    By \cref{lem:btc-swap}, we have $W(\mhatvec) \supseteq W(\mvec)$.
    We will construct a veto order $\sigma$ such that $W \supseteq W(\mhatvec) \supseteq W(\mvec)$, which will imply the claim.
    Notice that by definition of bottom trading cycles and $\widehat{\mvec}$, for all voters $v$ along $\phi$, we have $\mhat{v, \bot_{[\qvec \neq 0]}(v)} > 0$.
	In particular, we can update $\mvec$ with $\mhatvec$, and ensure the existence of a voter $v_1$ for whom $\m{v_1, \bot_{[\qvec \neq 0]}(v_1)} > 0$.	
	
	We now define $v_1$ to be the starting voter of the sequence $\sigma$, and let $c_1 = \bot_{[\qvec \neq 0]}(v_1)$.
	Define the~following modified weights $\widebar{\pvec} \in \Delta_{N-1}(V)$ and $\widebar{\qvec} \in \Delta_{N-1}(C)$.
	Let $\widebar{p}(v_1) = \p{v_1}-1$ and $\widebar{p}(v) = \p{v}$ for all $v \in V \setminus \SET{v_1}$, and similarly
	$\widebar{q}(c_1) = \q{c_1}-1$ and $\widebar{q}(c) = \q{c}$ for all $c \in C \setminus \SET{c_1}$.
	Define the modified matching $\mbarvec \in \Delta_{N-1}(V \times C)$ by $\mbar{v_1, c_1} = \m{v_1, c_1} - 1$ and $\mbar{v, c} = \m{v, c}$ for all $(v, c) \in (V \times C) \setminus \SET{(v_1, c_1)}$.	
	Notice that $\mbarvec$ is a $(\widebar{\pvec}, \widebar{\qvec})$-matching.
	By induction hypothesis, there exists a veto order $\widebar{\sigma}=(\widebar{v}_1, \ldots, \widebar{v}_{N-1})$ under $\widebar{\pvec}$ for which \cref{alg:sequential} returns a subset of candidates $W \supseteq W(\mbarvec)$.
	Define $\sigma = (v_1, \widebar{v}_1, \widebar{v}_2, \ldots, \widebar{v}_{N-1})$, which is a veto order under $\pvec$.
	
	Given the input $\elec$, $\qvec$ and $\sigma$, \cref{alg:sequential} first decrements the weight of $c_1$ by 1. 
	Then, the~algorithm runs exactly the same way as if it were given the input $\elec$, $\widebar{\qvec}$ and $\widebar{\sigma}$, and thus, it returns $W$. 
\end{proof}

As an immediate corollary, we obtain the following:

\begin{corollary} \label{cor:core-inside-sequential}
	Every candidate in the $\pq$-veto core can win under \sequentialrule with some veto order.  
\end{corollary}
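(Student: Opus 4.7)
The plan is to derive this corollary by combining \cref{thm:core-characterization-veto-order} with the matching-based characterization of the $\pq$-veto core established in \cref{sec:vetocore}. Given an arbitrary candidate $c$ in the $\pq$-veto core, I will first exhibit a $\pq$-matching that $c$ admits, then invoke \cref{thm:core-characterization-veto-order} to produce a veto order whose output contains $c$.

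More concretely, I would begin by letting $c$ be any candidate in the $\pq$-veto core. By \cref{thm:eq}, $c$ is $\pq$-dominant, which by definition (or directly by \cref{cor:witness}) means there exists a $\pq$-matching $\mvec$ that $c$ admits. For this step I should note that \sequentialrule is stated for integral weight vectors, so I would either assume $\pvec, \qvec$ are rational (and rescale to integers, which does not change admissibility), or appeal to the continuous-time framework developed later in \cref{sec:anonrule}; the cleanest approach is to observe that rescaling $\pvec, \qvec, \mvec$ by a common integer denominator preserves the property that $c$ admits the matching, so without loss of generality $\mvec \in \Delta_N(V \times C)$.

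Next, by the definition of $W(\mvec)$, the fact that $c$ admits $\mvec$ is exactly the statement $c \in W(\mvec)$. I then apply \cref{thm:core-characterization-veto-order} to this $\mvec$, which yields a veto order $\sigma$ under $\pvec$ such that \sequentialrule with input $(\elec, \qvec, \sigma)$ returns a set of winners $W \supseteq W(\mvec) \ni c$. Hence $c$ is in the returned set, i.e., $c$ wins under \sequentialrule with veto order $\sigma$, which is the claim.

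Since \cref{thm:core-characterization-veto-order} does all the heavy lifting, there is no real obstacle in this corollary; the only subtlety is the integrality convention for weight vectors, which is resolved by the rescaling argument mentioned above. I expect the final write-up to be essentially two or three sentences.
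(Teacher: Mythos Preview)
Your overall strategy coincides with the paper's: use \cref{cor:witness} to obtain a $\pq$-matching $\mvec$ admitted by the given candidate, then invoke \cref{thm:core-characterization-veto-order}. The only place you diverge is in handling integrality, and here your rescaling argument does not quite work. In the setting of \sequentialrule the vectors $\pvec, \qvec$ are already integral in $\Delta_N$, so they need no rescaling; the real issue is that the matching $\mvec$ supplied by \cref{cor:witness} may be fractional, and in general there is no reason it should have rational entries, so a ``common denominator'' argument is not available. Moreover, even if $\mvec$ were rational, rescaling all three of $\pvec, \qvec, \mvec$ by some $D$ changes $N$ to $DN$ and hence the instance of \sequentialrule, which is not what the corollary asserts.

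The paper sidesteps this by observing that, because $\pvec$ and $\qvec$ are integral, the existence of \emph{any} $\pq$-matching in $G_c$ implies the existence of an \emph{integral} one, via the integrality part of the Max-Flow Min-Cut Theorem (equivalently, total unimodularity of the bipartite constraint matrix). This one-line replacement for your rescaling step yields an $\mvec \in \Delta_N(V \times C)$ directly, after which \cref{thm:core-characterization-veto-order} applies exactly as you describe.
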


\begin{proof}
  	Whenever $G_a$ has an integral $\pq$-matching, the claim follows directly from \cref{thm:core-characterization-veto-order}. 
  	But it is impossible that $G_a$ contains only fractional $\pq$-matchings, because all entries of \pvec~and~\qvec are integral, so the existence of an integral matching follows from the integrality part of the Max-Flow Min-Cut Theorem.
\end{proof}

%Thus, we obtained a characterization of the $\pq$-veto core via a class of sequential vote-based voting rules allowing for tied outcomes.

\section{A Practical Voting Rule Electing from the Veto Core}
\label{sec:anonymous}
In this section, we present a continuous-time extension of \sequentialrule called \anonrule which does not require the specification of a veto order; as a result, the rule does not prioritize any voter over another.
Our rule can also be considered as a generalization (beyond the proportional veto core) of the \textsc{VetoByConsumption} algorithm (Definition~12) of \citet{ianovskiComputingProportionalVeto2021}. We will discuss this connection in more detail below.
We~show that this voting rule still picks a~set of winners tied at a $\pq$-matching; thus, it~constitutes a~canonical rule for electing from the~$\pq$-veto core. 
Most importantly, we show that \anonrule with uniform weights over voters and plurality scores over candidates satisfies many axioms desirable in practice.

\subsection{\anonrule} \label{sec:anonrule}

\anonrule, given in \cref{alg:anon}, starts off with the set of all candidates as eligible winners, just like \sequentialrule (\cref{alg:sequential}).
However, instead of a particular veto order, all voters are processed simultaneously in continuous time.
From time 0 to 1, every voter $v$, starting from her bottom choice, \emph{instantly} eliminates those with weight 0 and moves on to her next bottom choice.
When she arrives at a candidate with positive weight, she decrements his weight continuously at rate $p(v)$.
The set of winners consists of all candidates who are not eliminated until time 1.

In \cref{alg:anon}, along with the above continuous-time description, \anonrule is also given using  discrete events corresponding to the time steps at which a candidate is eliminated; this~shows that \anonrule can be implemented efficiently in a standard computational model.
Here, $\bot\inv_W(c)$ denotes the set of voters whose bottom choice among the subset $W$ of candidates is candidate $c$. 

\begin{algorithm}[h]
\caption{\anonrule}
\label{alg:anon}

\algrenewcommand\algorithmicloop{\textbf{from} $\timer=0$ \textbf{to} 1 \algorithmicdo}

\begin{algorithmic}[1]

\Statex{\textbf{Input:}\: An election $\elec = (V, C, \prof)$,}  
\Statex{\phantom{\textbf{Input:}\:} weight vectors $\pvec \in \Delta(V)$ and $\qvec \in \Delta(C)$.}

\medskip
\hrule
\bigskip

\Statex{\textbf{Output:}\: A non-empty subset of candidates $W$ in the $\pq$-veto core of $\elec$}
\Statex{\phantom{\textbf{Output:}\:} who are the winners tied at a $\pq$-matching $\mvec$, i.e., $W = W(\mvec)$.}

\medskip
\hrule
\bigskip

\Statex{\textbf{Initialization:}\:  $W = C$,}
\Statex{\phantom{\textbf{Initialization:}\:} $\weight(c) = q(c)$ for all $c \in C$.}

\medskip
\hrule
\bigskip

\Statex{\large \textit{Continuous-Time Description:}}

\medskip

\Loop 
\ForAll{$v \in V$ \textbf{in parallel}}
	\While{$\weight(\bot_W(v)) = 0$} \label{alg2:while-start}
			\State{\textbf{remove} $\bot_W(v)$ \textbf{from} $W$} 
	\EndWhile \label{alg2:while-end}
	\State{\textbf{decrement} $\weight(\bot_W(v))$ \textbf{at rate} $p(v)$} 
\EndFor
\EndLoop
	
\State{\Return{$W$}}

\bigskip
\hrule
\bigskip

\Statex{\large \textit{Implementation via Discrete Events:}}
\setcounter{ALG@line}{0}

\medskip

\State{$\timer = 0$}
\While{$\timer < 1$}
  \ForAll{$c \in C \setminus [\bm{\weight} \neq 0]$}
     \If{$\invbot_W(c) \neq \emptyset$} 
     	\State{\textbf{remove} $c$ \textbf{from} $W$}
     \EndIf
  \EndFor
  \medskip
  \State{$\delta = \min_{c \in W} \frac{\weight(c)}{p(\invbot_W(c))}$}
  \medskip
  \ForAll{$c \in W$}
    \State{\textbf{decrement} $\weight(c)$ \textbf{by} $\delta \cdot p(\invbot_{W}(c))$}
  \EndFor
  \State{\textbf{increment} $\timer$ \textbf{by} $\delta$}
\EndWhile

\State{\Return{$W$}}
\end{algorithmic}
\end{algorithm}

\begin{theorem} \label{thm:continuous-rule}
	\anonrule (\cref{alg:anon}) returns a non-empty subset of candidates $W$ consisting of the winners tied at a $\pq$-matching.   
\end{theorem}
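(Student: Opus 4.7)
My plan is to mirror the structure of the proof of \cref{thm:sequential-outcome}, defining the witnessing matching $\mvec$ directly from the execution trace of \cref{alg:anon}. Concretely, I set $\m{v, c}$ equal to $p(v)$ times the Lebesgue measure of times $t \in [0, 1]$ during which $\bot_W(v) = c$ (with the convention that the cascade of instantaneous while-loop eliminations has been resolved at each moment). Three things then need to be verified: that $\mvec$ is a valid $\pq$-matching, that $W$ is non-empty at termination, and that $W = W(\mvec)$.

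The pivotal invariant is that, whenever $W$ is non-empty, the aggregate remaining weight $\sum_{c \in W} \weight(c)$ decreases at rate exactly $\sum_{v \in V} p(v) = 1$, since each voter is decrementing some element of $W$ simultaneously. Starting from total weight $1$ at $\timer = 0$, this forces total weight to equal $1 - t$ at time $t$. From this one invariant I would derive all three structural claims: the row sums of $\mvec$ are $p(v)$, because each voter is active throughout the full unit interval; the column sums are $q(c)$, because every candidate's weight is driven to $0$ by $\timer = 1$ at the latest; and $W$ remains non-empty throughout $[0, 1]$, since $W = \emptyset$ at time $t$ would force $1 - t = 0$, i.e., $t = 1$, at which point the discrete implementation exits before firing any further removal.

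The inclusion $W \subseteq W(\mvec)$ is the straightforward direction: if $c \in W$ at termination, then $c$ is in $W$ at every time $t$. So whenever a voter $v$ decrements some $c^\bullet = \bot_W(v)$, the presence of $c$ in $W$ at that moment gives $c^\bullet \preceq_v c$. Taking the $v$-best such $c^\bullet$ with $\m{v, c^\bullet} > 0$ yields $c \succeq_v \top_{[\mrow{v} \neq 0]}(v)$, i.e., $c \in W(\mvec)$.

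The main obstacle is the reverse inclusion $W(\mvec) \subseteq W$. Suppose for contradiction that some $c \in W(\mvec)$ is eliminated at time $t^* < 1$ via a voter $v$ whose bottom in $W$ just prior to $t^*$ is $c$. After $c$ (and any instantaneous cascade) has been removed, the non-emptiness invariant ensures that $W$ still contains a candidate with positive weight, so $v$'s new bottom is some $c^\bullet \succ_v c$ with $\weight(c^\bullet) > 0$. Since $t^* < 1$, voter $v$ then decrements $c^\bullet$ for a positive duration, yielding $\m{v, c^\bullet} > 0$ with $c^\bullet \succ_v c$ and contradicting $c \in W(\mvec)$. The boundary case $t^* = 1$ is not an actual elimination, because the main loop exits at $\timer = 1$ before the while-loop can fire; hence any candidate whose weight reaches $0$ exactly at termination remains in $W$. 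Articulating this termination boundary cleanly --- particularly when multiple candidates' weights vanish simultaneously at $\timer = 1$ --- will be the most delicate point of the write-up.
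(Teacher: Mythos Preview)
Your proposal is correct and follows essentially the same approach as the paper. Both define $\mvec$ identically from the execution trace (the paper writes $\m{v,c} = \int_0^1 \ind_{v,c}(t)\,\p{v}\,\dd t$, which is exactly your Lebesgue-measure formulation), and both rely on the total-weight invariant $\sum_c \weight(c,t)=1-t$ for non-emptiness and for the column sums. The only cosmetic difference is that where you argue the two inclusions $W \subseteq W(\mvec)$ and $W(\mvec) \subseteq W$ separately, the paper instead introduces $C_v(t) = \Set{c}{c \succeq_v c_v(t)}$, observes that these sets are monotone in $t$, and identifies both $W$ and $W(\mvec)$ directly with $\bigcap_{v} C_v(1)$; your contradiction argument for $W(\mvec) \subseteq W$ is just this monotonicity unpacked. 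Your concern about the boundary $t^*=1$ is also somewhat overstated: the paper does not single it out, since the monotonicity formulation makes the terminal step no different from any other.
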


\begin{proof}
	For proving the theorem, we will be using the continuous-time description of \cref{alg:anon}.
	The proof proceeds along the same lines as that of \cref{thm:sequential-outcome}.
	We first show that \cref{alg:anon} returns a non-empty subset of candidates.
	The weights of candidates initially add up to $1$, and each voter $v \in V$ decrements the weights at rate $\p{v}$ from time 0 to 1. 
	Since $\p{V} = 1$, there is a candidate $w$ whose weight reaches 0 only at time 1. 
	Since a candidate can be removed from $W$ only after his weight reaches 0, $w \in W$ when the algorithm terminates.
	Thus, \cref{alg:anon} returns a~non-empty subset of candidates. 
	
	Next, we show that $W$ consists of the winners tied at some  $\pq$-matching when the algorithm terminates.
	Let $c_v(t)$ denote the candidate whose weight is decremented by voter $v$ at time $t$.
	Let $\ind_{v, c}(t)$ be an indicator function which takes the value 1 if $c = c_v(t)$ and 0 otherwise.
	Let $\mvec \in \Delta_N(V \times C)$ be defined by $\m{v, c} = \int_0^1 \ind_{v, c}(t) \cdot \p{v} \; \dd t$ for all $v \in V$ and $c \in C$, i.e., the total amount by which $v$ decreases the weight of $c$ over the entire algorithm. 
	Note that $\mvec$ is a $\pq$-matching since each voter $v$ decrements the weight of $c_v(t)$ at rate $p(v)$ from time $t=0$ to $1$, and the weight of each candidate $c$ (initialized with $\q{c}$) must have reached 0 by the end.
	We will show that \cref{alg:anon} returns the winners tied at $\mvec$.
	
	We use $C_v(t) = \Set{c \in C}{c \succeq_v c_v(t)}$ to denote the set of candidates that voter $v$ ranks weakly higher than $c_v(t)$. 
	Note that $C_v(t) \subseteq C_v(t')$ for all $t > t'$, because the set of remaining candidates shrinks over time, and $v$ always decrements the weight of her lowest-ranked remaining candidate. 
	All candidates are initially in $W$ and, for~all voters $v \in V$, candidates $c \in W$ that $v$ ranks lower than $c_v(t)$ (i.e., candidates not in $C_v(t)$) are removed at time $t$ by the \textbf{while} loop (lines~\ref{alg2:while-start}~--~\ref{alg2:while-end}). 
	Note that such candidates $c$ must have weight 0 at time $t$; otherwise, by definition, $v$ would decrease the score of such a candidate instead at time $t$.
	By induction on time\footnote{Induction can be formally performed by considering only the time steps when a candidate is removed from $W$, i.e., by considering the discrete set of values that are assigned to the variable $\timer$ in the implementation via discrete events.},  $W = \bigcap_{t=0}^1 \bigcap_{v \in V}  C_v(t) = \bigcap_{v \in V} C_v(1)$ at the~end.
	By definition, the set of winners tied at $\mvec$ is $W(\mvec) = \Set{c \in C}{c \succeq_{v} \top_{[\mvec(v) \neq 0]}(v) \text{ for all } v \in V} 
			 = \Set{c \in C}{c \succeq_{v} c_v(1) \text{ for all } v \in V}
			 = \bigcap_{v\in V} C_v(1)$
	which completes the proof.
\end{proof}

As mentioned above, \anonrule is a generalization of the \textsc{VetoByConsumption} rule of \citet{ianovskiComputingProportionalVeto2021} picking a candidate from the proportional veto core.
Therefore, their rule also satisfies our definition of ties, and consequently, the convexity property given in \cref{sec:convex}.
However, the generalization to arbitrary weights, while preserving the~convexity property, is not trivial because if the weight of a candidate $c$ is 0 and no voter ranks $c$ at the bottom, then the~naive generalization is not well-defined, as it involves dividing 0 by 0. 
Another way to deal with this issue would be to ignore candidates with weight 0 as in \pluralityveto; however, doing so may lead to violating convexity of the set of returned winners.

We also remark that \anonrule can be easily generalized further. 
The rates at which a voter decrements the weight of her bottom choice need not stay constant over time. 
Instead, a~voting rule can prescribe rates $r_v(t) \geq 0$ for all voters $v$ and times $t \in [0,1]$, so long as these rates satisfy $\int_0^1 r_v(t) \dd t = p(v)$. 
The proof of \cref{thm:continuous-rule} then carries through verbatim. 
This general rule then also subsumes the discrete rule \sequentialrule, by having voter $v_i$ decrement at rate $N$ during the interval $[(i-1)/N, i/N)$, while all other voters have rate 0 during that interval.

\subsection{Axiomatic Properties of \anonplurule} \label{sec:axioms}

We refer to the instantiation of \anonrule (\cref{alg:anon}) with uniform weights over voters and plurality scores over candidates as \anonplurule.
For ease of notation, in the analysis in this section, we treat \anonrule as using \emph{unnormalized} weights, i.e., the~initial scores are $q(c) = \plu(c)$, and each voter reduces scores at a rate of 1. Of course, scaling all $p(v)$ and $q(c)$ by $1/n$ translates the arguments to an implementation with normalized scores and rates.
Our main result here is that in addition to guaranteeing metric distortion 3 (due to \cref{lem:opt}), \anonplurule satisfies a number of desirable axioms:

\begin{itemize}
\item \axiom{Anonymity} and \axiom{Neutrality}, respectively, require the voting rule to not discriminate a priori between voters and candidates.
In other words, the outcome should depend only on the rankings, not the identities of voters or candidates.

\item \axiom{Resolvability} requires that for any set of tied winners $W$, adding a new voter $v$ whose top choice is some candidate $w \in W$ makes $w$ the unique winner. 

\item \axiom{Monotonicity} requires that when a voter moves a winning candidate $w$ higher up in her ranking, then $w$ still wins. 

\item \axiom{Mutual Majority} requires that if a strict majority of voters rank all candidates in $S$ ahead of all other candidates, then the winner(s) must be from $S$. 
\axiom{Majority} requires this only for singleton sets.
\axiom{Majority Loser} requires this only for sets whose complement is a singleton. 

\item \axiom{Reversal Symmetry} requires that if a candidate $w$ is the unique winner, and the~ranking of each voter is reversed, then $w$ must not be a winner any longer.
\end{itemize}

\begin{theorem}\label{thm:axioms}
	\anonplurule returns a non-empty set $W$ of tied winners  all of whom have distortion at most 3, and for every metric space consistent with the rankings for which convexity is defined, $W$ is convex\footnote{In the sense that the interior of the convex hull of $W$ contains no candidates other than those in $W$ (see \cref{sec:convex}).}.
	Furthermore, the voting rule satisfies \axiom{Anonymity}, \axiom{Neutrality}, \axiom{Resolvability}, \axiom{Monotonicity}, \axiom{Majority}, \axiom{Majority Loser}, \axiom{Mutual Majority}, and \axiom{Reversal Symmetry}.   
\end{theorem}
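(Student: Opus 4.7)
The plan is to decompose \cref{thm:axioms} into its constituent claims and verify each, building on the structural results already established. The distortion and convexity of $W$ are almost immediate: by \cref{thm:continuous-rule}, $W = W(\mvec)$ for some $(\pvec^\uni, \qvec^\plu)$-matching $\mvec$, so every $c \in W$ is plurality dominant and thus has distortion at most $3$ by \cref{lem:opt}; convexity in every consistent metric space admitting convexity then follows from \cref{prop:ties,prop:convexity}. Anonymity and Neutrality follow by inspection of the continuous-time description of \cref{alg:anon}: voters act in parallel with no prescribed order, and candidate identities enter only through the rankings, so both relabelings commute with the execution.

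Next I would handle the ``mass-counting'' axioms --- Majority, Majority Loser, and Mutual Majority --- all of which reduce to the book-keeping that the total veto mass expended across $[0,1]$ is exactly $n$, and that a voter never vetoes a candidate she ranks above every surviving candidate. For Mutual Majority with majority coalition $M$ and set $S$ ranked above $C \setminus S$ by every voter in $M$, each voter in $M$ spends her entire unit of veto mass inside $C \setminus S$ before ever touching $S$; since $\plu(C \setminus S) \leq n - |M| < |M|$, the combined mass of $M$ already suffices to eliminate every candidate in $C \setminus S$ before $M$ ``escapes.'' Majority and Majority Loser are special cases. For Monotonicity, if a voter $v$ moves the winning $w$ higher in her ranking, a coupling between the two executions shows that at every discrete event, $v$'s bottom among the remaining candidates in $W$ is at most as high in her original ranking as before, so $w$ is never vetoed by $v$ in the new execution when it was not vetoed in the old one; an inductive argument then gives that $w$'s score is pointwise at least as large in the modified execution, hence $w$ still survives.

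For Resolvability, I would add a new voter $v^*$ with $\top(v^*) = w \in W$. Since $v^*$ ranks $w$ at the top, $v^*$ never vetoes $w$ until $W = \{w\}$, so $v^*$'s unit of veto mass is spent entirely on $C \setminus \{w\}$. I would extend the matching produced by the original execution by assigning $v^*$ wholly to $w$, and then use Hall's condition (\cref{lem:hall}) to argue that, in the augmented election, any tied winner $c$ of the new output must satisfy $c \succeq_{v^*} w$ --- impossible unless $c = w$. A trajectory argument then certifies that the augmented execution indeed produces a matching of this form, giving $W_{\text{new}} = \{w\}$. For Reversal Symmetry, the uniqueness of $w$ means $W(\mvec) = \{w\}$ for the produced $\mvec$; by \cref{prop:ties} there are indices $k_v$ with $\bigcap_v \{v\text{'s top-}k_v\} = \{w\}$. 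After reversal, the roles of plurality and veto are swapped, and I would show that the same coalition, viewed through the reversed rankings, witnesses a violation of Hall's condition at $w$ in $\elec^R$, so by \cref{thm:eq}, $w$ lies outside the veto core of $\elec^R$ and is therefore not a winner.

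The main obstacle will be Resolvability and Reversal Symmetry; both require passing between the trajectory view and the matching view of the veto core. Resolvability relies on showing that the extra plurality mass at $w$ cannot be absorbed away by cascading elimination of other candidates, which calls for a dynamic analogue of Hall's condition tracking, at every moment, the cumulative veto mass absorbed by each coalition. Reversal Symmetry is the most algebraically involved step, since translating ``uniqueness in $\elec$'' (an intersection-of-prefix statement) into a Hall-violation for the reversed rankings requires a careful correspondence between prefix sets in $\elec$ and suffix sets in $\elec^R$, and is where the bulk of the effort will go.
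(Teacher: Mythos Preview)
Your treatment of distortion, convexity, \axiom{Anonymity}, \axiom{Neutrality}, and \axiom{Mutual Majority} matches the paper and is fine.

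For \axiom{Monotonicity}, your sketch is in the right spirit but is misfocused: the coupling is not primarily about the single voter $v$ who moved $w$ up. The paper maintains the two global invariants $W(t_i)\supseteq W'(t_i)$ and $\weight'(c,t_i)\le \weight(c,t_i)$ for all $c\neq w$, and the key step (their \cref{prop:opposition-monotonicity}) is that whenever $W(t)\supseteq W'(t)$, \emph{every} voter who opposed some $c\neq w$ in the original run also opposes $c$ in the modified run. The argument is then an induction over elimination times, not a statement about $v$'s bottom choice alone.

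For \axiom{Resolvability}, the matching detour does not buy you anything. Extending the original matching by assigning $\hat v$ wholly to $w$ does give a matching $\mhatvec$ with $W(\mhatvec)=\{w\}$, but the augmented execution need not produce $\mhatvec$: since $\hat v$ ranks $w$ first, $\hat v$ never decrements $w$ unless $W'=\{w\}$ already, so in the matching $\mvec'$ actually produced one has $M'(\hat v,w)=0$, and the $\hat v$-constraint on $W(\mvec')$ is only a top-$k$ prefix of $\hat v$'s ranking, not $\{w\}$. The real work is the trajectory coupling you allude to, and the paper does exactly that: it reruns the monotonicity induction, but now uses that $\hat v$ always opposes some $c\neq w$, so for at least one such $c$ the inclusion $B_+(c,t)\subsetneq B'_+(c,t)$ is strict. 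This strictness forces the last non-$w$ candidate to be eliminated strictly before time $1$.

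For \axiom{Reversal Symmetry}, your prefix-intersection route is much harder than necessary, and it is not clear how ``the same coalition'' extracted from $\bigcap_v\{\text{top-}k_v\}=\{w\}$ would yield a Hall violation in $\elec^R$ (the prefix-intersection data does not single out a coalition of voters). The paper's argument is a two-liner: if $w$ is the \emph{unique} winner then $\plu(w)>\veto(w)$, because otherwise the $\veto(w)$ voters with $\bot(v)=w$ spend the entire interval $[0,1]$ on $w$, so the remaining mass $n-\plu(w)$ on $C\setminus\{w\}$ is depleted at rate at most $n-\veto(w)\le n-\plu(w)$ and some other candidate survives. In the reversed election one then has $\plu^R(w)=\veto(w)<\plu(w)=\veto^R(w)$, and the coalition $T=(\bot^R)^{-1}(w)$ has $\neigh^R_w(T)=\{w\}$, so Hall fails and $w$ is outside the plurality veto core of $\elec^R$.
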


\subsection{Axioms Violated by \anonplurule}

Before giving the proof of \cref{thm:axioms}, we discuss some other important axioms that are violated by \anonplurule.

Candidate $a$ is a \emph{Condorcet winner} if for every other candidate $b$, there are at least $n / 2$ voters who rank $a$ higher than $b$. 
A Condorcet winner does not always exist. 
A voting rule satisfies \axiom{Condorcet Consistency} if it picks a Condorcet winner whenever one exists. 
\citet{gkatzelis:halpern:shah:resolving} showed that a Condorcet winner is not always a plurality dominant candidate; thus, neither \pluralitymatching nor \pluralityveto nor \anonplurule is Condorcet consistent. 
On the other hand, \citet{anshelevich:bhardwaj:postl} showed that the distortion of any Condorcet winner is at most 3; as a~result, all of the above voting rules can be made Condorcet consistent by forcing them to pick a~Condorcet winner when one exists, without hurting the distortion. 

A voting rule $f$ satisfies \axiom{Electoral Consistency} \cite{youngSocialChoiceScoring1975a} if, when the election \election is partitioned into two elections $\elec_1 = (V_1, C, \prof)$ and $\elec_2 = (V_2, C, \prof)$ with disjoint voters such that $f(\elec_1) \cap f(\elec_2) \neq \emptyset$, 
the set of winners of the whole election is $f(\elec) = f(\elec_1) \cap f(\elec_2)$.
The following example shows that $f=\;$\anonplurule violates \axiom{Electoral Consistency}.

\begin{example}\label{ex:consistency}
There are five candidates $C = \SET{a,b,c,d,e}$, and five voters partitioned as $V_1 = \SET{1, 2, 3}$ and $V_2 =\SET{4, 5}$. 
The rankings are as follows: 
\begin{multicols}{2}
	\begin{itemize}
		\item $a \succ_1 c \succ_1 d \succ_1 b \succ_1 e$
		\item $b \succ_2 c \succ_2 a \succ_2 d \succ_2 e$
		\item $d \succ_3 c \succ_3 b \succ_3 a \succ_3 e$
	\end{itemize}
	\columnbreak
	\begin{itemize}
		\item $c \succ_4 a \succ_4 b \succ_4 d \succ_4 e$
		\item $e \succ_5 d \succ_5 b \succ_5 a \succ_5 c$
	\end{itemize}
\end{multicols}
\noindent $f(\elec) \neq f(\elec_1) \cap f(\elec_2)$ because $f(\elec) = \SET{a, b, c}$, $f(\elec_1) = \SET{a, b, c, d}$ and $f(\elec_2) = \SET{a, b, c, d, e}$.
\end{example}

Perhaps most concerning is that \anonplurule violates the \axiom{Pareto Property}, which requires that a Pareto dominated candidate $a$ (i.e., a candidate whom \emph{every} voter ranks below some candidate $b$) can never be chosen as a winner.
The Pareto property is violated in an input with $m$ candidates and two voters with respective rankings $c_1~\succ~c_2~\succ~c_3 \succ \cdots \succ c_{m-1} \succ c_m$ and $c_m \succ c_2 \succ c_3 \succ \cdots \succ c_{m-1} \succ c_1$.
In this input, \emph{all} candidates are tied at the unique matching, and hence are returned as winners by \anonplurule, even though $c_3, \ldots, c_{m-1}$ are Pareto dominated by $c_2$.
Fortunately, in a sense, this type of preferences is the only setting in which a Pareto dominated candidate can win: a Pareto dominated candidate $a$ can only win if the~candidate $b$ dominating him (and hence $a$ himself) has plurality score 0. 
To~see~this, notice first that the score of $b$ cannot be decreased until $a$ has been eliminated.
Because the score of $b$ must reach 0 by time 1, $b$'s score must be decremented before time 1, so $a$ would be eliminated before time 1, and could not win.

To restore the \axiom{Pareto Property}, one could of course post-process the set of winners $W$ and eliminate any Pareto dominated candidates.
This will leave a non-empty set of winners, because with any Pareto dominated candidate $a$, any candidate $b$ dominating $a$ will always be a winner.
Unfortunately, it would come at the expense of violating convexity and the principle that candidates with the same witnessing matching should all be returned by the voting rule.

An alternative approach --- which would resolve the specific example we gave above --- is to have candidates start not with their plurality scores, but with a score that also accounts for intermediate positions, e.g., scores that decrease exponentially with the rank of a candidate, or even Borda count scores.
However, for such scores, the analogue of \cref{lem:opt} is not known to hold, and probably does not hold verbatim.
Thus, this approach will necessitate a more refined analysis of the distortion of \anonrule (or, equivalently, candidates in the \pq-veto core) under more general candidate weights $\qvec$.
We discuss this direction more in \cref{sec:conclusion}.

\subsection{Proof of \cref{thm:axioms}}
The fact that all of the winning candidates have distortion at most 3 holds by \cref{lem:opt}, because the~winners are all in the plurality veto core.
The fact that the set of winning candidates is convex, for every metric space consistent with the rankings for which convexity is defined, follows from \cref{prop:convexity} because \anonrule returns the winners tied at a $\pq$-matching $\mvec$ by \cref{thm:continuous-rule} and $W(\mvec)$ is prefix-intersecting by \cref{prop:ties}.

\anonplurule is completely symmetric with respect to members of both groups, and thus, \axiom{Anonymity} and \axiom{Neutrality} follow directly from the definition.
We prove the~remaining properties as separate lemmas.

\begin{lemma}
  \anonplurule satisfies \axiom{Majority}, \axiom{Majority Loser} \& \axiom{Mutual Majority}.
\end{lemma}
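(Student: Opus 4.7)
The plan is to prove the strongest of the three axioms, \axiom{Mutual Majority}, from which \axiom{Majority} (the case $|S| = 1$, with uniqueness of the winner following because $W$ is always non-empty) and \axiom{Majority Loser} (the case $S = C \setminus \{w\}$, so that a strict majority ranks $w$ last) follow as immediate corollaries. So let $S \subseteq C$ and let $M \subseteq V$ be a strict majority of voters, each of whom ranks every candidate in $S$ above every candidate in $C \setminus S$. Since each voter in $M$ has her top choice in $S$, we have $\plu(S) \geq |M| > n/2$, and hence the initial total score of $C \setminus S$ satisfies $q(C \setminus S) = n - \plu(S) \leq n - |M| < n/2$.

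I would define the potential $Q(t) := \sum_{c \in C \setminus S} \weight(c,t)$, treating eliminated candidates as having weight $0$. The key observation is that whenever $C \setminus S \cap W(t) \neq \emptyset$, every voter $v \in M$ satisfies $\bot_{W(t)}(v) \in C \setminus S$, because $v$ ranks all of $S$ above all of $C \setminus S$. Consequently, each voter in $M$ contributes rate $1$ to the decrement of $Q$, giving $\frac{\dd Q}{\dd t} \leq -|M|$ as long as $Q(t) > 0$. Therefore $Q$ reaches $0$ at some time $t^* \leq (n - |M|)/|M| < 1$, the last inequality using $|M| > n/2 > n - |M|$.

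It then remains to argue that $X := C \setminus S \cap W(t^*) = \emptyset$ after the cascading while-loop eliminations (lines \ref{alg2:while-start}--\ref{alg2:while-end}) at time $t^*$. Suppose for contradiction that some $c' \in X$ survives. Once all of the while loops have terminated, $\weight(\bot_W(v)) > 0$ for every voter $v$; but all candidates in $X$ have weight $0$ (since $Q(t^*) = 0$), so this forces $\bot_W(v) \notin X$ for every voter, and in particular $\bot_W(v) \in S$ for $v \in M$. However, because $v \in M$ ranks all of $C \setminus S$ strictly below all of $S$, the condition $\bot_W(v) \in S$ implies $W \cap (C \setminus S) = \emptyset$, contradicting $c' \in X \subseteq W$. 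Hence $X = \emptyset$ immediately after $t^*$, so no candidate in $C \setminus S$ appears in the final winner set, establishing \axiom{Mutual Majority}.

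The main subtlety I expect is the parallel and potentially cascading nature of the while-loop eliminations at the critical time $t^*$; phrasing the final step as a contradiction on $\bot_W(v)$ for voters in the majority $M$ avoids having to sequence the individual eliminations explicitly. Once \axiom{Mutual Majority} is in place, \axiom{Majority} and \axiom{Majority Loser} follow by specializing $S$ to a singleton or to the complement of a singleton.
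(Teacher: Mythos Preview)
Your proof is correct and follows essentially the same approach as the paper: reduce to \axiom{Mutual Majority}, track the total remaining weight on $C\setminus S$, and use that the majority $M$ always opposes inside $C\setminus S$ to drive that total to zero strictly before time~$1$. The paper phrases the rate comparison symmetrically (the weight on $\comp{S}$ starts smaller and decreases faster than the weight on $S$), whereas you bound $t^*$ directly via $Q(0)/|M|$; both arrive at the same conclusion. Your contradiction step handling the cascading \textbf{while} loop at time $t^*$ is in fact more explicit than the paper's, which simply asserts that once the total weight of $\comp{S}$ hits zero no candidate from $\comp{S}$ can be a winner.
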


\begin{proof}    
  We prove that \anonplurule satisfies the \axiom{Mutual Majority} property. Then, \axiom{Majority} and \axiom{Majority Loser} are special cases.
  Suppose that a strict majority $V' \subseteq V$ of voters rank all candidates in $S$ ahead of all other candidates.
  Then, the initial total weight of $S$ is strictly greater than that of $\comp{S}$.
  Until all candidates from $\comp{S}$ are eliminated, each voter in $V'$ will veto a candidate in $\comp{S}$, so the total weight on $\comp{S}$ is decreased at a~strictly higher rate than the total weight on $S$. 
  As a result, the total weight of $\comp{S}$ must reach 0 strictly before the total weight of $S$ reaches 0. 
  In particular, no candidate from $\comp{S}$ can be a winner.
\end{proof}

\begin{lemma}
  \anonplurule satisfies \axiom{Reversal Symmetry}.
\end{lemma}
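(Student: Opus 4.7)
My plan is to prove the contrapositive in a strong form: if $w$ is a winner of \anonplurule in both $\elec$ and the reversed election $\elec^{rev}$, then $w$ is not the unique winner in $\elec$. I would approach this via a careful combination of the matching characterization (\cref{thm:continuous-rule}, \cref{cor:witness}) and the continuous-time execution of \cref{alg:anon}.

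First, I would extract the two witnessing matchings. By \cref{thm:continuous-rule}, $w$ being a winner in $\elec$ yields a matching $\mvec$ with $v$-marginals $1/n$ and $c$-marginals $\plu(c)/n$, whose support lies in $\{(v,c) : w \succeq_v c\}$. Similarly, $w$ being a winner in $\elec^{rev}$ yields a matching $\mvec'$ with $v$-marginals $1/n$ and $c$-marginals $\veto(c)/n$, whose support lies in $\{(v,c) : c \succeq_v w\}$, because reversing every ranking turns $w \succeq^{rev}_v c$ into $c \succeq_v w$ and turns the plurality score in $\elec^{rev}$ into the veto score in $\elec$. The two supports meet only at the column $c = w$.

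Next, I would use the combined object $\mvec + \mvec'$, which has uniform $v$-marginals $2/n$ and $c$-marginals $(\plu(c) + \veto(c))/n$, to construct a second witnessing matching in $\elec$. The idea is to perform a bottom-trading-cycle style rearrangement (in the spirit of \cref{lem:btc-exists} and \cref{lem:btc-swap}) on $\mvec' + \mvec$: for each voter, $\mvec'$ contributes mass on candidates weakly above $w$ and $\mvec$ contributes mass on candidates weakly below $w$, so there is ``slack'' to reroute a small amount of $\mvec$-mass from some $c \preceq_v w$ toward a higher-ranked candidate $c^*$ without breaking column marginals, by pushing along a cycle through $w$. This would produce a $\pq$-matching that is admitted by some $c^* \succ_{v^*} w$ for some voter $v^*$, and then by \cref{cor:witness}, $c^*$ is in the $\pq$-veto core.

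Finally, to upgrade ``$c^*$ is in the core'' to ``$c^*$ is returned by \anonplurule on $\elec$,'' I would argue along the lines of the continuous-time proof of \cref{thm:continuous-rule}: since $w$ is the unique algorithmic winner, the matching constructed by the algorithm satisfies $W(\cdot) = \{w\}$, which means $w$ is the highest-ranked matched candidate for every voter; combined with the rerouted matching above, I would show that the specific matching produced by the continuous-time process in $\elec$ necessarily also certifies $c^*$, by tracking when $c^*$ would actually be eliminated and showing that, given $\veto(w) \geq 1$ from the existence of $\mvec'$, there is no voter left to eliminate $c^*$ before time $1$. The main obstacle is precisely this last step: translating a combinatorial fact (a second admitted matching exists) into an algorithmic fact (the specific matching produced by the continuous-time rule admits a second winner), because the algorithm's matching is uniquely determined by the rankings while many valid matchings exist. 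I expect this to require an invariant argument: tracking, in parallel, when each voter's ``pointer'' to $\bot_W(v)$ passes through $w$ in $\elec$ versus $\elec^{rev}$, and using the $\mvec'$-certificate to show that in the $\elec$-execution, the pointer of at least one voter reaches a candidate strictly above $w$ before time $1$ — forcing a candidate $c^* \neq w$ to coexist with $w$ in $W$ at termination, contradicting uniqueness.
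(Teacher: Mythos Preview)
Your approach is far more elaborate than necessary, and the gap you yourself flag is real: there is no clean way to pass from ``some other $\pq$-matching admits a candidate $c^* \neq w$'' to ``the specific matching produced by the continuous-time process admits $c^*$.'' The algorithm commits to one particular matching, and the veto core can be strictly larger than $W(\mvec)$ for that matching, so your plan stalls exactly where you say it does. The bottom-trading-cycle rerouting you sketch is also underspecified; it is not clear what cycle structure you obtain from $\mvec + \mvec'$ or why the rerouted object remains a $(\pvec^{\uni},\qvec^{\plu})$-matching rather than one with the wrong column marginals.

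The paper's argument is a two-line counting observation that bypasses matchings entirely on the forward side. First, if $w$ is the \emph{unique} winner in $\elec$, then $\plu(w) > \veto(w)$: the $\veto(w)$ voters who rank $w$ last spend the entire run decrementing only $w$ (since $w$ survives to time $1$), so the remaining candidates' total weight of $n - \plu(w)$ is decremented at rate at most $n - \veto(w)$; if $\veto(w) \geq \plu(w)$ this rate is too small to exhaust that weight before time $1$, forcing a second survivor. Second, in $\elec^{rev}$ the roles swap, giving $\plu^{rev}(w) = \veto(w) < \plu(w) = \veto^{rev}(w)$; then Hall's condition fails for the coalition of voters ranking $w$ last in $\elec^{rev}$ (their only neighbor in $G_w$ is $w$ itself, with insufficient weight), so $w$ is not even in the plurality veto core of $\elec^{rev}$, let alone a winner of \anonplurule. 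You were trying to build a second winner in $\elec$; the paper instead proves $w$ is knocked out of the core in $\elec^{rev}$, which is both stronger and far easier.
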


\begin{proof}
  We first prove that if $w$ is the~\emph{unique} winner, then $\plu(w) > \veto(w)$.
  We prove the contrapositive, so we assume that $\veto(w) \geq \plu(v)$. 
  The voters ranking $w$ last, i.e., $\bot\inv(w)$, will oppose $w$ until $w$ is eliminated. 
  Since $w$ wins, this happens only at time 1, so none of the voters in $\bot\inv(w)$ will ever decrement the weight of another candidate. 
  Thus, the total weight of all candidates except $w$ is decremented at rate at most $n-\veto(w) \leq n-\plu(w)$. 
  Since the total weight of all candidates other than $w$ is initially $n-\plu(w)$, it cannot reach 0 until time 1.
  Therefore, at least one candidate other than $w$ is a winner as well.
  So we have proved that $\plu(w) > \veto(w)$. 
  
  In the election with reversed rankings, $\plu(w) < \veto(w)$ instead.
  Then, $w$ is not a plurality-dominant candidate due to Lemma 6 in \citet{gkatzelis:halpern:shah:resolving}; this can also be directly observed by mildly adapting the argument we just gave in the case $\plu(w) \leq \veto(w)$.
  Thus, $w$ cannot be a winner in the election with reversed rankings.	
\end{proof}

The proofs of \axiom{Monotonicity} and \axiom{Resolvability} are more involved. 
They involve comparing the winner(s) under the original rankings with those under suitably modified rankings. 
To facilitate the analysis, we introduce the following notation; the notation with a prime refers to the execution with modified rankings. 

\begin{itemize}
	\setlength\itemsep{0.40em}
	\item $\tau(c)$ and $\tau'(c)$ denote the time when candidate $c$ is eliminated. When $c$ is not eliminated until time 1 (i.e., a winner), we write $\tau(c) = 1$.
	\item For any time $t \in [0,1]$, let $W(t) = \Set{c}{\tau(c) \geq t}$ and $W'(t) = \Set{c}{\tau'(c) \geq t}$ denote the~sets of candidates who are not eliminated before time $t$. Thus, $W = W(1)$ and $W' = W'(1)$ are the winning candidates.
	\item Similarly, let $W_+(t) = \Set{c}{\tau(c) > t}$ and $W'_+(t) = \Set{c}{\tau'(c) > t}$ be the sets of candidates eliminated strictly after time $t$, i.e., the set of non-eliminated candidates at time $t$ after the~\textbf{while} loop (lines \ref{alg2:while-start}--\ref{alg2:while-end}) in the continuous-time description has been executed.  
    \item $\weight(c,t)$ and $\weight'(c,t)$ denote the weight of candidate $c$ at time $t$.
	\item We write $B(c,t) = \invbot_{W(t)}(c)$ and $B'(c,t) = \invbot_{W'(t)}(c)$ for the sets of voters opposing candidate $c$ among candidates not eliminated before time $t$, i.e., the sets of voters whose bottom choice among candidates not eliminated before time $t$ is $c$.
	\item Similarly, we write $B_+(c,t) = \invbot_{W_+(t)}(c)$ and $B'_+ = \invbot_{W_+'(t)}(c)$ for the sets of voters opposing candidate $c$ among candidates eliminated strictly after time $t$.
\end{itemize}

\begin{lemma} \label{lem:monotonicity}
  \anonplurule satisfies \axiom{Monotonicity}.
\end{lemma}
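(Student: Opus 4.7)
The plan is to reduce monotonicity to the case of a single adjacent transposition in one voter's ranking: a single voter $v^*$ swaps the winner $w$ with the candidate $c$ immediately above $w$ in her ranking. The general case follows by iterating such transpositions. I would then split into two subcases: (A) $c$ is not the top choice of $v^*$, so plurality scores are identical in the two executions; (B) $c$ was the top of $v^*$, so $\plu'(w)=\plu(w)+1$ and $\plu'(c)=\plu(c)-1$. I use primes for quantities in the modified execution.

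The core argument is a coupling between the original and modified executions. Let $t_0$ be the first time in the original at which $v^*$'s bottom choice in $W(t)$ is $w$ (if no such time exists, the swap never actually affects $v^*$'s behavior and there is nothing to prove). Before $t_0$, $v^*$'s bottom lies strictly below $w$ in her ranking, and these positions are unaffected by the $w$-$c$ swap; hence $v^*$ decrements the same candidate in both executions, and by induction all other voters do too, so the two executions are fully synchronized on $[0,t_0]$. From $t_0$ onward, $v^*$ decrements $w$ in the original but $c$ in the modified. I would maintain the invariant that (i) for every $x\notin\{w,c\}$, the weight and alive/eliminated status coincide in the two executions; (ii) $\weight'(w,t)\ge\weight(w,t)$; and (iii) $\weight'(c,t)\le\weight(c,t)$. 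This invariant is preserved on any interval where $c$ and $w$ are both still alive in both executions, because no other voter's bottom choice flips: the divergence of $v^*$ only affects the weights of $w$ and $c$, and as long as these candidates are alive in both executions, the rest of the system evolves identically.

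The invariant is only threatened at the first time $t_1$ at which $c$ is eliminated in the modified while still alive in the original. At that instant $v^*$ in the modified switches her opposition to $w$, but (ii) gives $w$ accumulated cushion of at least $t_1-t_0$ units in the modified relative to the original. A second coupling would compare the post-$t_1$ modified execution to the execution of \anonplurule on the reduced election in which $c$ is deleted from every ranking, together with the original's dynamics; the accumulated cushion on $w$ is exactly enough to absorb the additional opposition $w$ now receives from $v^*$, so $w$ survives until time $1$. Case B contributes an extra permanent unit of cushion to $w$ and a deficit on $c$, which slots into the same coupling without change. I expect the main obstacle to be the second coupling: after $c$ is prematurely eliminated in the modified, voters (other than $v^*$) who had $c$ as their bottom in the original must now oppose different candidates in the modified, so one must carefully account for how this propagates and verify that $w$'s cushion is never exhausted before time $1$.
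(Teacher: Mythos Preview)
Your reduction to a single adjacent transposition and the exact synchronization on $[0,t_0]$ are fine, but the proposal has a real gap precisely where you flag it. Once $c$ is eliminated in the modified run at $t_1$, \emph{every} voter who was opposing $c$ redirects her opposition, and some of them land on third candidates $x\notin\{w,c\}$. Those $x$ now lose weight strictly faster in the modified run, so your invariant (i) --- identical status and weight for all $x\notin\{w,c\}$ --- is destroyed immediately. When such an $x$ is in turn eliminated early, another wave of redirected opposition follows, and so on; some of these waves can hit $w$. Your cushion of $t_1-t_0$ accounts only for the direct relief $v^*$ gave $w$ on $[t_0,t_1]$, not for the cascaded extra opposition $w$ may receive afterward, and comparing to a ``$c$-deleted'' election does not control this because the original run does \emph{not} coincide with the $c$-deleted run either. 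Separately, in your Case~B the initial weights of $w$ and $c$ already differ at time $0$, so $c$ can reach weight $0$ and be eliminated in the modified run \emph{before} $t_0$; this already breaks the synchronization on $[0,t_0]$ and triggers the same cascade, so Case~B does not ``slot in without change.''

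The paper's proof sidesteps the cascade entirely by maintaining a much weaker invariant over \emph{all} non-$w$ candidates at once, rather than trying to localize the difference to $\{w,c\}$. At every elimination time $t_i$ of the modified run it keeps: (a)~$\weight'(x,t_i)\le \weight(x,t_i)$ for every $x\neq w$, and (b)~$W'(t_i)\subseteq W(t_i)$. The engine is a one-line observation: whenever the survivor set in the modified run is a subset of that in the original, every voter whose bottom (in the original run) is some $x\neq w$ still has $x$ as her bottom in the modified run --- shrinking the survivor set and raising only $w$ cannot dislodge $x$ from the bottom. Hence each $x\neq w$ is opposed at least as hard in the modified run, which propagates~(a); and since total weight is conserved, $w$'s weight is automatically weakly larger, which in turn keeps $w$ alive. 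This monotone inequality absorbs every cascade for free; replacing your exact-coupling invariant~(i) with it is the missing idea.
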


Let $w \in W$ be a winner ranked higher by a voter in the modified election. 
We begin with the following simple proposition, which will be used repeatedly:

\begin{proposition} \label{prop:opposition-monotonicity}
  If $W(t) \supseteq W'(t)$, then $B(c,t) \subseteq B'(c,t)$ for any candidate $c \in W'(t) \setminus \{w\}$.
\end{proposition}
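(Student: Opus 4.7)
The plan is a direct case analysis on the voter. Let $v^*$ denote the unique voter whose ranking was modified (by moving $w$ up in her ranking) when passing from the original to the modified election; for every other voter, the original and modified rankings coincide.

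I would take an arbitrary $v \in B(c,t)$, so that $c = \bot_{W(t)}(v)$ under $v$'s original ranking. Since $c \in W'(t)$ and $W'(t) \subseteq W(t)$, restricting $v$'s (original) ranking from $W(t)$ to the smaller set $W'(t)$ preserves $c$ as the minimum. For any $v \neq v^*$, the modified ranking agrees with the original, and therefore $c = \bot_{W'(t)}(v)$ under the modified ranking as well, giving $v \in B'(c,t)$ at once.

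The only nontrivial case is $v = v^*$. The modified ranking of $v^*$ differs from her original ranking solely in the position of $w$, and in particular the relative orderings among all candidates other than $w$ are preserved. Since $c \in W'(t) \setminus \{w\}$, it follows that $c$ remains the minimum of $W'(t) \setminus \{w\}$ under the modified ranking. If $w \notin W'(t)$, this already yields $v^* \in B'(c,t)$. If instead $w \in W'(t)$, I additionally need $c$ to remain ranked below $w$ in $v^*$'s modified ranking; this holds because $w \in W(t)$ (as $w$ is a winner, $\tau(w) = 1 \geq t$), so $c$ is ranked below $w$ by $v^*$ in the original ranking, and moving $w$ upward in $v^*$'s ranking can only raise $w$'s position relative to $c$.

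There is no real obstacle: the argument is essentially bookkeeping about how a single voter's swap of $w$ upward interacts with a shrinking set of surviving candidates. The only subtlety is the hypothesis $c \neq w$, which is precisely what rules out the degenerate case in which $v^*$'s bottom choice in $W'(t)$ could genuinely change identity due to $w$ being moved past other candidates.
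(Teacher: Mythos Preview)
Your proof is correct and follows essentially the same approach as the paper. The paper's version is slightly more streamlined: rather than splitting into cases on whether $v = v^*$, it argues uniformly that since $c \neq w$ and candidates other than $w$ can only move down (or stay) in any voter's modified ranking, $c$ remains below all of $W(t) \supseteq W'(t)$ in the modified election for every voter; your case analysis unpacks exactly this observation. (Incidentally, in your $w \in W'(t)$ case you could obtain $w \in W(t)$ directly from the hypothesis $W'(t) \subseteq W(t)$ rather than invoking $\tau(w)=1$.)
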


\begin{proof}
  Consider a voter $v \in B(c,t)$. 
  By definition, $v$ ranks $c$ below all candidates $c' \in W(t)$ under the original election.
  Because $c \neq w$, and candidates other than $w$ can only be ranked lower or in the same place in the modified election, $v$ also ranks $c$ below all candidates $c' \in W(t)$ in the~modified election. 
  And because $W'(t) \subseteq W(t)$, $v$ ranks $c$ below all candidates $c' \in W'(t)$ in the~modified election, meaning that $v \in B'(c,t)$.
\end{proof}

We are now ready to prove \cref{lem:monotonicity}.

\begin{extraproof}{\cref{lem:monotonicity}}
Let $t_1 < t_2 < \cdots < t_k = 1$ be the times at which one or more candidate is eliminated under the modified election, i.e., $\SET{t_1, \ldots, t_k} = \Set{\tau'(c)}{c \in C}$.
For notational convenience, we write $t_0 = 0$.

We will prove the following by induction on $i = 0, \ldots, k$:

\begin{itemize}
	\setlength\itemsep{0.40em}
	\item For all $c \in W'(t_i) \setminus \SET{w}$, the weight at time $t_i$ is weakly lower in the modified election, i.e., $\weight'(c,t_i) \leq \weight(c,t_i)$.
	\item $W(t_i) \supseteq W'(t_i)$, i.e., candidates can only be eliminated earlier in the modified instance.
\end{itemize}

For $i=0$, the first part holds because any candidate $c \neq w$ can only have moved down or stayed in the same place in each ranking, so the initial weights are weakly lower.
The second part of the~hypothesis holds because $W'(0) = W(0) = C$. 

For the induction step from $i$ to $i+1$, consider time $t_i$ for $i < k$.
First, consider the \textbf{while} loop (lines \ref{alg2:while-start}--\ref{alg2:while-end}) in the continuous-time description.
By induction over its iterations, in each iteration, we maintain that the set of uneliminated candidates in the original election is a superset of the uneliminated candidates in the modified election.
This holds initially by the outer induction hypothesis.
Whenever it holds, \cref{prop:opposition-monotonicity} implies that $B(c,t_i) \subseteq B'(c,t_i)$ for all $c \neq w$.
In particular, if $c$ gets eliminated at time $t_i$ in the original election, then $B(c,t_i) \neq \emptyset$ and $\weight(c,t_i) = 0$. 
Thus, $B'(c,t_i) \neq \emptyset$, and the \textbf{while} loop also eliminates $c$ at time $t_i$ in the modified election, because $\weight'(c,t_i) \leq \weight(c,t_i) = 0$ and $c$ is opposed by a non-empty set of voters.
Thus, we showed that $W_+(t_i) \supseteq W'_+(t_i)$.

Next, observe that because the sum of all weights at time $t_i$ is exactly $n(1-t_i)$ in both elections, and every candidate $c \neq w$ has $\weight'(c,t_i) \leq \weight(c,t_i)$, we have $\weight'(w,t_i) \geq \weight(w,t_i)$.

Applying \cref{prop:opposition-monotonicity} after the \textbf{while} loop terminates, we obtain that $B'_+(c,t_i) \supseteq B_+(c,t_i)$ for all candidates $c \neq w$.
Since each voter opposes exactly one candidate, $\bigcup_{c \in W_+(t_i)} B_+(c,t_i) = V = \bigcup_{c \in W'_+(t_i)} B'_+(c,t_i)$.
Hence, the facts that $B'_+(c,t_i) \supseteq B_+(c,t_i)$ for all $c \in W'_+(t_i) \setminus \SET{w}$ and $W'_+(t_i) \subseteq W_+(t_i)$ imply that $B'_+(w,t_i) \subseteq B_+(w,t_i)$.
Combining this fact with the bounds on weights, and defining $\delta_c = \frac{\weight(c,t_i)}{|B_+(c,t_i)|}$ and $\delta'_c = \frac{\weight'(c,t_i)}{|B'_+(c,t_i)|}$ (with the definition that $0/0 := \infty$), we observe that $\delta'_w \geq \delta_w$ and $\delta'_c \leq \delta_c$ for all $c \neq w$.

Since the weight of each uneliminated candidate $c$ at time $t_i$ is decreased at rate $|B'_+(c,t_i)|$ under the modified election, the next elimination will happen at time $t_{i+1} = t_i + \min_{c \in W'_{+}(t_i)} \frac{\weight'(c,t_i)}{|B'_+(c,t_i)|}$. 
If $w$ minimizes this expression, then we obtain that $t_{i+1} = t_i + \delta'_w \geq t_i + \delta_w \geq \tau(w) = 1$, and thus, $w$ is a winner.
Otherwise, for all candidates $c \in W_+(t_{i+1}) \setminus \SET{w}$, we have $\weight(c,t_{i+1}) = \weight(c,t_i) - (t_{i+1}-t_i) \cdot |B_+(c,t_i)| \geq \weight'(c,t_i) - (t_{i+1}-t_i) \cdot |B'_+(c,t_i)| = \weight'(c,t_{i+1})$. 
Thus, we showed that the first part of the hypothesis (inequality on weights) holds at time $t_{i+1}$.

A further consequence of this inequality on weights is that, in the time interval $(t_i, t_{i+1})$, no candidate from $W'_+(t_i)$ can be eliminated in the original election, i.e., $W'_+(t_i) \subseteq W(t_{i+1})$. 
Furthermore, $W'(t_{i+1}) = W'_+(t_i)$ by definition of $t_{i+1}$.
Therefore, $W'(t_{i+1}) = W'_+(t_i) \subseteq W(t_{i+1})$, which completes the proof for the second part of the hypothesis.
Thus, we have shown that $w \in W'$. 
\end{extraproof}

\begin{lemma} \label{lem:resolvability}
  \anonplurule satisfies \axiom{Resolvability}.
\end{lemma}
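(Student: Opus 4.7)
The plan is to imitate the coupling technique used in the \axiom{Monotonicity} proof. I would run \anonplurule in parallel on $\elec$ and on the modified election $\elec'$ (obtained by adding voter $v$ with top choice $w$), using unnormalized weights so both executions evolve on $[0,1]$ with each voter decrementing at rate $1$; the only differences between them are that $w$ starts with one extra unit of weight in $\elec'$ and that $v$ vetoes alongside the others. I would establish by induction on elimination events the following two invariants at every time $t$: (i) $\weight'(c,t) \leq \weight(c,t)$ for every $c \neq w$; and (ii) $W'_+(t) \subseteq W_+(t)$.

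The key observation maintaining these invariants is that, under (ii), if some $u \in V$ satisfies $\bot_{W_+(t)}(u) = c$ and $c \in W'_+(t)$, then nothing ranked by $u$ below $c$ lies in $W_+(t)$, so nothing lies in $W'_+(t) \subseteq W_+(t)$ either, giving $\bot_{W'_+(t)}(u) = c$ too. Consequently, every voter opposing $c$ in $\elec$ also opposes $c$ in $\elec'$, possibly joined by additional voters in $V$ whose bottoms shifted upward and by $v$ if $c$ is $v$'s current bottom in $W'_+$. For $c \neq w$ this makes the derivative of $\weight'(c,t) - \weight(c,t)$ non-positive, preserving (i); at any elimination event of $c$ in $\elec$, (i) forces $\weight'(c,t) = 0$, and the opposer argument forces $c$ to be simultaneously eliminated in $\elec'$, preserving (ii). Summing the identity $\sum_c \weight'(c,t) - \sum_c \weight(c,t) = (1-t)$ together with (i) yields $\weight'(w,t) \geq \weight(w,t) + (1-t) \geq 1 - t > 0$ for every $t < 1$, so $w \in W'$.

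To rule out $|W'| \geq 2$, I assume for contradiction there exists $c \in W' \setminus \{w\}$; then (ii) in the limit gives $W' \subseteq W$. Let $t^\dagger < 1$ be the last elimination time in either execution; on $[t^\dagger, 1)$ both $W_+$ and $W'_+$ are constant at $W$ and $W'$, so weights are linear: $\weight(c,t) = \alpha(c)(1-t)$ and $\weight'(c,t) = \alpha'(c)(1-t)$ with $\alpha(c) = |\{u \in V : \bot_W(u) = c\}|$ and $\alpha'(c) = |\{u \in V \cup \{v\} : \bot_{W'}(u) = c\}|$. Invariant (i) on this interval forces $\alpha'(c) \leq \alpha(c)$ for every $c \in W' \setminus \{w\}$. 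Setting $V_* = \{u \in V : \exists c' \in W' \setminus \{w\} \text{ with } c' \prec_u w\}$, the right-hand side of the summed inequality counts voters $u \in V$ whose $W$-bottom lies in $W' \setminus \{w\}$, which is at most $|V_*|$ since such a bottom is ranked below $w$ by $u$. The left-hand side equals $|V_*| + 1$: voter $v$ contributes $1$ because $|W'| \geq 2$ combined with $w$ being $v$'s top forces $\bot_{W'}(v) \neq w$, while each $u \in V$ contributes exactly when $\bot_{W'}(u) \neq w$, i.e., exactly when $u \in V_*$. The inequality $|V_*| + 1 \leq |V_*|$ is a contradiction, so $W' = \{w\}$. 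The main difficulty I anticipate is justifying the invariants through simultaneous elimination events in both executions; once that bookkeeping is in place, the concluding counting argument on the stable interval $[t^\dagger, 1)$ is short and direct.
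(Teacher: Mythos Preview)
Your coupling invariants (i) and (ii) and the derivation that $w$ survives are exactly the paper's strategy: the paper proves the same two invariants by induction over the elimination times of the modified run, using the same opposer-inclusion observation (the non-strict part of their \cref{prop:opposition-resolvability}) that you spell out. So the first half of your argument is essentially identical to the paper's.

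Where you diverge is in the uniqueness step. The paper isolates a \emph{single} candidate $c\in W'_+(t_{k-1})\setminus\{w\}$ for which the opposer inclusion is strict (because the new voter $\hat v$ must oppose someone other than $w$), obtains $\delta'_c<\delta_c\le 1-t_{k-1}$, and concludes that $c$ would be eliminated strictly between $t_{k-1}$ and $t_k$, contradicting the definition of the $t_i$. You instead pass to the last elimination time $t^\dagger$ in \emph{either} run, note that on $[t^\dagger,1)$ both survivor sets are already $W$ and $W'$ with constant opposition rates $\alpha,\alpha'$, use invariant (i) on that interval to get $\alpha'(c)\le\alpha(c)$ for every $c\in W'\setminus\{w\}$, and then sum to reach the counting contradiction $|V_*|+1\le|V_*|$. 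Both arguments encode the same phenomenon---the added voter contributes one extra unit of opposition against $W'\setminus\{w\}$ that cannot be matched on the original side---but your aggregate counting is a bit more direct: by taking $t^\dagger$ as the last elimination in \emph{either} execution, the rates on $[t^\dagger,1)$ are exactly constant in both runs, so the linear form $\weight(c,t)=\alpha(c)(1-t)$ is literally true and the comparison $\alpha'(c)\le\alpha(c)$ drops out of (i) without any further timing considerations. The paper's route, by contrast, works at $t_{k-1}$ (an elimination time only of the modified run) and needs the observation that every candidate's weight in the original reaches $0$ by time $1$ to bound $\delta_c$. Your acknowledged ``main difficulty''---carrying the invariants through cascading simultaneous eliminations---is handled in the paper by an inner induction over the iterations of the \textbf{while} loop, and the same device works verbatim for your setup.
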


The proof of resolvability is similar to the proof of monotonicity.
Again, let $w \in W$ be a winner in the original election. 
Recall that in the modified election, all voters in the original election keep their original rankings, and there is an additional voter $\hat{v}$ who ranks $w$ at the top; the rest of her ranking is arbitrary.
The following proposition is similar to \cref{prop:opposition-monotonicity}, and captures a stronger inference that can be drawn here.

\begin{proposition} \label{prop:opposition-resolvability}
  If $W(t) \supseteq W'(t)$, then $B(c,t) \subseteq B'(c,t)$ for any candidate $c \in W'(t) \setminus \SET{w}$, and unless $W'(t) = \SET{w}$, there exists a candidate $c \in W'(t) \setminus \SET{w}$ for whom the set inclusion is strict, i.e., $B(c,t) \subsetneq B'(c,t)$. 
\end{proposition}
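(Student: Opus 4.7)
The plan is to address the two conclusions in sequence: the set inclusion mirrors the proof of Proposition~\ref{prop:opposition-monotonicity}, while the strict inclusion follows from the presence of the single extra voter $\hat{v}$, whose top-ranking of $w$ forces her to veto some candidate in $W'(t)\setminus\SET{w}$.

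For the inclusion $B(c,t)\subseteq B'(c,t)$ with $c \in W'(t)\setminus\SET{w}$, I would take any $v \in B(c,t)$ and note that $B(c,t)\subseteq V$, so $v$ is an original voter whose ranking is identical in both elections. Since $v$ ranks $c$ below every candidate in $W(t)$ and $W'(t)\subseteq W(t)$, voter $v$ also ranks $c$ below every candidate in $W'(t)$; hence $v \in B'(c,t)$. In contrast to the monotonicity proof, we do not actually need $c \neq w$ for this first conclusion (because no voter has changed her ranking); the assumption $c \neq w$ is only needed to exclude $w$ from the candidate witnessing strict inclusion in the second conclusion.

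For the strict inclusion, assume $W'(t)\neq \SET{w}$, so that $W'(t)\setminus\SET{w}$ is nonempty. Because $\hat{v}$ ranks $w$ at the top, her bottom choice inside $W'(t)$ must be some candidate $c^\star \in W'(t)\setminus\SET{w}$, so $\hat{v} \in B'(c^\star,t)$. On the other hand, $\hat{v}\notin V$ while $B(c^\star,t)\subseteq V$, so $\hat{v}\notin B(c^\star,t)$. Combined with the inclusion from the previous paragraph, this yields $B(c^\star,t)\subsetneq B'(c^\star,t)$, as required. I do not anticipate any real obstacle: the entire argument rests on the observation that placing $w$ at the top of $\hat{v}$'s ranking is precisely what guarantees her veto is ``spent'' on a candidate in $W'(t)\setminus\SET{w}$, which is exactly the set over which we need to witness strict inclusion.
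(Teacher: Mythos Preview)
Your proposal is correct and follows essentially the same argument as the paper: the inclusion is obtained by restricting from $W(t)$ to $W'(t)\subseteq W(t)$ for any original voter $v\in B(c,t)$, and strictness is witnessed by the extra voter $\hat v$, who must oppose some $c^\star\in W'(t)\setminus\SET{w}$ since she ranks $w$ at the top. Your side remark that $c\neq w$ is not actually needed for the first inclusion (because no original voter changes her ranking) is a valid observation, though the paper simply states the inclusion for $c\neq w$ as required by the proposition.
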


\begin{proof}
  Fix any candidate $c \in W'(t) \setminus \SET{w}$, and consider a voter $v \in B(c,t)$. 
  By definition, $v$ ranks $c$ below all candidates $c' \in W(t) \supseteq W'(t)$. 
  Therefore, $v$ ranks $c$ below all candidates $c' \in W'(t)$, implying that $B(c,t) \subseteq B'(c,t)$.
  To prove strictness for at least one candidate, assume that $W'(t) \neq \SET{w}$.
  Then, $\hat{v}$ ranks some candidate $c \in W'(t) \setminus \SET{w}$ lowest among $W'(t)$, and thus, $\hat{v} \in B'(c,t) \setminus B(c,t)$ for that candidate $c$.
\end{proof}

\begin{extraproof}{\cref{lem:resolvability}}
As in the proof of \cref{lem:monotonicity}, let $t_1 < t_2 < \cdots < t_k = 1$ be the times at which one or more candidate is eliminated under the modified election, i.e., $\SET{t_1, \ldots, t_k} = \Set{\tau'(c)}{c \in C}$.
For notational convenience, we write $t_0 = 0$.

We first prove the exact same facts about the process by induction on $i = 0, \ldots, k$, namely:

\begin{itemize}
	\setlength\itemsep{0.40em}
	\item For all $c \in W'(t_i) \setminus \SET{w}$, the weight at time $t_i$ is weakly lower in the modified election, i.e., $\weight'(c,t_i) \leq \weight(c,t_i)$.
	\item $W(t_i) \supseteq W'(t_i)$, i.e., candidates can only be eliminated earlier in the modified instance.
\end{itemize}

For $i=0$, the first part of the hypothesis holds because only $w$ has gained one additional point initially from $\hat{v}$, while every other candidate's score is the same as before. The second part holds simply because $W'(0) = W(0) = C$.

For the induction step from $i$ to $i+1$, consider time $t_i$ for $i < k$.
First, an identical proof to that of \cref{lem:monotonicity} proves that $W_+(t_i) \supseteq W'_+(t_i)$, substituting \cref{prop:opposition-resolvability} for \cref{prop:opposition-monotonicity} in the inner inductive step.

In the original election, the sum of all weights at time $t_i$ is $n \cdot (1-t_i)$, while in the modified election, it is $(n+1) \cdot (1-t_i)$. 
Because every candidate $c \neq w$ has $\weight'(c,t_i) \leq \weight(c,t_i)$, we~obtain that $\weight'(w,t_i) \geq 1 + \weight(w,t_i)$.

We now distinguish two cases. 
First, if $W'_+(t_i) = \SET{w}$, then $w$ is the only remaining candidate at time $t_i$, and the time is $t_i < 1$ since $i < k$. 
Thus, $w$ is the unique winner, completing the proof of \axiom{Resolvability}.
Otherwise, we apply \cref{prop:opposition-resolvability}, and obtain that $B'_+(c,t_i) \supseteq B_+(c,t_i)$ for all candidates $c \in W'_+(t_i) \setminus \SET{w}$, and the inclusion is strict for at least one such candidate.
Because each voter opposes exactly one candidate, $\bigcup_{c \in W_+(t_i)} B_+(c,t_i) = V$ and $\bigcup_{c \in W'_+(t_i)} B'_+(c,t_i) = V \cup \SET{\hat{v}}$.
Now, the facts that $B'_+(c,t_i) \supseteq B_+(c,t_i)$ for all $c \in W'_+(t_i) \setminus \SET{w}$ and $W'_+(t_i) \subseteq W_+(t_i)$, along with the fact that the inclusion is strict for at least one candidate, imply that $\SetCard{B'_+(w,t_i)} \leq \SetCard{B_+(w,t_i)}$.
Again, we define $\delta_c$ and $\delta'_c$ as the distance in time by when $c$ would be eliminated at the current opposition rates.
The previous inequalities then imply that $\delta'_c \leq \delta_c$ for all $c \in W'_+(t_i) \setminus \SET{w}$, and $\delta'_w \geq \delta_w$.

The rest of the inductive proof is now exactly the same as for \cref{lem:monotonicity}, proving that under the modified election, the set of eliminated candidates remains a supserset at all times, and the weights remain weakly lower at all times.

Finally, consider time $t_{k-1}$ (i.e., the last step at which a candidate was eliminated).
Assume for contradiction that after the \textbf{while} loop, in the modified election, at least one candidate other than $w$ was uneliminated, i.e., $W'_+(t_{k-1}) \neq \SET{w}$.
Consider the candidate $c$ with $B'_+(c,t_{k-1}) \supsetneq B_+(c,t_{k-1})$ who is guaranteed to exist by the strict inclusion part of \cref{prop:opposition-resolvability}.
This candidate then has $\delta'_c < \delta_c$.
Because no candidate survived until after time 1, we have that $\delta_c \leq 1-t_{k-1}$, and thus, $\delta'_c < 1-t_{k-1}$. This means that $c$ must be eliminated strictly between times $t_{k-1}$ and $1 = t_k$ under the modified election. 
But by definition of the $t_i$, no eliminations occur between $t_{k-1}$ and $t_k$, giving a~contradiction.
Therefore, $W'_+(t_{k-1}) = \SET{w}$, and $w$ is the unique winner under the modified election.
\end{extraproof}

%\section{The Geometrical Structure of the Veto Core}
%\label{sec:extended}
%\input{sec/convexity}

\section{Conclusion and Future Directions}
\label{sec:conclusion}
We introduced a generalization of the veto core, and showed that it is precisely characterized by the winners under a natural class of sequential veto-based voting rules in which the given weights for candidates are counterbalanced by the given weights for voters.
Applying this insight continuously to the veto core for initial candidate weights equal to plurality scores yields a very natural and simple voting rule with metric distortion 3 which remedies the key flaws of the recently proposed \pluralityveto, by being anonymous, neutral, and resolvable.
In addition, this new rule satisfies many important social choice axioms; combined with its simplicity, this makes it well suited for practical adoption.

Our work raises a wealth of interesting questions for future work.
The study of the generalized veto core and its connection to metric distortion is very intriguing.
In particular, for which initial candidate weights \qvec do candidates in the core guarantee constant (or otherwise small) distortion? 
This question is particularly important in the context of eliminating Pareto dominated candidates, which ideally should come without sacrificing optimal distortion or the characterization of winners in terms of matchings.
Which other interesting guarantees are obtained by electing candidates from a weighted core?

An interesting and natural choice of initial weights for candidates are the $k$-approval scores: the~number of voters who rank a given candidate in their top $k$ choices.
Notice that, as $k$ ranges between $1$ and $m$, these initial weights interpolate between the plurality veto core and the proportional veto core.
Thus, they give rise to a spectrum of voting rules achieving different tradeoffs between the protection of minorities and the majority.
For a quantification of majority and minority principles, see \citet{kondratevMeasuringMajorityPower2020}.

The characterization in terms of veto orders also raises interesting questions.
The simultaneous version of the rule was motivated largely by ensuring anonymity, a very desirable fairness criterion for an election.
The focus on fairness raises an interesting question about veto orders: what would the \emph{voters} prefer to be the veto order?
It is not difficult to show that a voter $v$ is always weakly better off (in the sense of preferring the outcome) by moving her vote to the very end of the sequence.
However, we currently do not know whether a stronger version holds: do voters always weakly prefer moving their veto later in the sequence? 
While we have not identified counter-examples to this~stronger conjecture, we suspect that counter-examples might exist.
However, if the conjecture were true, it~would provide additional principled justification for the simultaneous veto process.

We verified that \anonplurule satisfies several desirable axioms for voting rules, but fails to satisfy certain others: the Pareto property, Condorcet consistency, and consistency. 
Over the~years, researchers have identified dozens of desirable axioms for voting rules, and a more complete exploration may be of interest as well --- indeed, \citet{ianovskiComputingProportionalVeto2023} very recently posted an updated version of their [2021] paper which undertakes such an analysis for the~proportional veto core.
For axioms which \anonplurule fails to satisfy, it~would be interesting to see whether modifications to the voting rule can satisfy these axioms, while retaining the distortion-3 guarantee or sacrificing only a small loss in the distortion guarantee.
Perhaps even more fundamentally, it~would be interesting to investigate which combinations of axioms are fundamentally incompatible with low metric distortion.

\anonplurule has the desirable property of \axiom{Resolvability}, meaning that a tie can always be resolved in favor of any winning candidate $w$ by adding a single voter ranking $w$ at the top.
In a sense, this implies that ties should be ``rare''.
The desideratum of rare ties motivates the axiom of \axiom{Asymptotic Resolvability} \citep{schulzeNewMonotonicCloneindependent2011}, requiring that the proportion of profiles resulting in a tie approach zero as the number of voters increases.
It would be of interest to understand whether \anonplurule is asymptotically resolvable, and perhaps even to quantify the probability of ties under a suitable probabilistic model of voter preferences.

\citet{ianovskiComputingProportionalVeto2021} showed that the expected size of the proportional veto core is $m/2$ as $n \rightarrow \infty$ under the ``Impartial Culture'' (uniformly random preferences) assumption.
It would be interesting to study the expected size of the plurality veto core because it gives a lower bound on the expected number of candidates with distortion at most 3 due to \cref{lem:opt}.
It could be also of interest to study the expected size of the $\pq$-veto core in general.

\begin{acks}
We would like to thank Jannik Peters for pointing us to the work of \citet{ianovskiComputingProportionalVeto2021}, and sharing the observation that there is a connection between the proportional veto core and \pluralityveto.
\citet{petersNote2023} contains a writeup of this observation, along with additional connections of \pluralityveto to other notions in social choice theory.
We would also like to thank anonymous reviewers for very detailed and helpful feedback, and Shaddin Dughmi for useful conversations.
\end{acks}

% Bibliography
\bibliographystyle{ACM-Reference-Format}
\bibliography{davids-bibliography/names,davids-bibliography/conferences,davids-bibliography/bibliography,davids-bibliography/publications,other_references}

% Appendix
\appendix

\end{document}